\documentclass[11pt,letterpaper,onecolumn,accepted=2021-09-02]{quantumarticle}%
\pdfoutput=1 

\usepackage[utf8]{inputenc}
\usepackage[english]{babel}
\usepackage[T1]{fontenc}

\DeclareMathAlphabet{\mathbbold}{U}{bbold}{m}{n}

\usepackage[numbers,sort&compress]{natbib}
\usepackage{amsmath,amsfonts,amssymb}
\usepackage{amsthm}
\usepackage{graphicx}
\usepackage{fullpage}
\usepackage[pagebackref]{hyperref}
\usepackage{thm-restate}
\usepackage[capitalize,nameinlink]{cleveref}

\usepackage{tikz}
\usepackage{pgfplots}
\pgfplotsset{compat=1.13}
\usepackage{enumerate}
\usepackage{qcircuit}

\allowdisplaybreaks


\setcounter{MaxMatrixCols}{30}
\providecommand{\U}[1]{\protect\rule{.1in}{.1in}}
\newtheorem{theorem}{Theorem}

\newtheorem{definition}[theorem]{Definition}

\newtheorem{fact}[theorem]{Fact}
\newtheorem{lemma}[theorem]{Lemma}

\newtheorem{problem}[theorem]{Problem}
\newtheorem{proposition}[theorem]{Proposition}




\newcommand{\id}{\mathbbold{1}}
\newcommand{\E}{\mathop{\mathbb{E}}}
\newcommand{\Var}{\mathop{\mathrm{Var}}}



\newcommand{\eps}{\varepsilon}
\newcommand{\Reals}{\mathbb{R}}

\newcommand{\mathify}[1]{\ifmmode{#1}\else\mbox{$#1$}\fi}

\newcommand{\bra}[1]{\langle #1 |}
\newcommand{\ket}[1]{| #1 \rangle}
\newcommand{\braket}[1]{\langle #1 \rangle}

\newcommand{\Refl}{\mathcal{R}}
\newcommand{\Oracle}{\mathcal{O}}
\newcommand{\poly}{\mathrm{poly}}
\newcommand{\diamondnorm}[1]{\left|\left| #1 \right|\right|_\diamond}
\newcommand{\FS}{\textsc{Fourier Sampling}}
\newcommand{\Fn}{\mathcal{F}_n}
\newcommand{\Half}{\mathrm{Half}}

\begin{document}

\title{The Quantum Supremacy Tsirelson Inequality}

\author{William Kretschmer}
\email{kretsch@cs.utexas.edu}
\thanks{A preliminary version of this work appeared in the proceedings of the 12th Innovations in Theoretical Computer Science Conference (ITCS 2021) \cite{Kre21}.}
\affiliation{Department of Computer Science, The University of Texas at Austin, Austin, TX 78712, USA}

\maketitle

\begin{abstract}
A leading proposal for verifying near-term quantum supremacy experiments on noisy random quantum circuits is linear cross-entropy benchmarking. For a quantum circuit $C$ on $n$ qubits and a sample $z \in \{0,1\}^n$, the benchmark involves computing $|\langle z|C|0^n \rangle|^2$, i.e. the probability of measuring $z$ from the output distribution of $C$ on the all zeros input. Under a strong conjecture about the classical hardness of estimating output probabilities of quantum circuits, no polynomial-time classical algorithm given $C$ can output a string $z$ such that $|\langle z|C|0^n\rangle|^2$ is substantially larger than $\frac{1}{2^n}$ (Aaronson and Gunn, 2019). On the other hand, for a random quantum circuit $C$, sampling $z$ from the output distribution of $C$ achieves $|\langle z|C|0^n\rangle|^2 \approx \frac{2}{2^n}$ on average (Arute et al., 2019).

In analogy with the Tsirelson inequality from quantum nonlocal correlations, we ask: can a polynomial-time quantum algorithm do substantially better than $\frac{2}{2^n}$? We study this question in the query (or black box) model, where the quantum algorithm is given oracle access to $C$. We show that, for any $\varepsilon \ge \frac{1}{\mathrm{poly}(n)}$, outputting a sample $z$ such that $|\langle z|C|0^n\rangle|^2 \ge \frac{2 + \varepsilon}{2^n}$ on average requires at least $\Omega\left(\frac{2^{n/4}}{\mathrm{poly}(n)}\right)$ queries to $C$, but not more than $O\left(2^{n/3}\right)$ queries to $C$, if $C$ is either a Haar-random $n$-qubit unitary, or a canonical state preparation oracle for a Haar-random $n$-qubit state. We also show that when $C$ samples from the Fourier distribution of a random Boolean function, the naive algorithm that samples from $C$ is the optimal 1-query algorithm for maximizing $|\langle z|C|0^n\rangle|^2$ on average.
\end{abstract}




\section{Introduction}
A team based at Google has claimed the first experimental demonstration of quantum computational supremacy on a programmable device \cite{AAB+19}. The experiment involved \textit{random circuit sampling}, where the task is to sample (with nontrivial fidelity) from the output distribution of a quantum circuit containing random $1$- and $2$-qubit gates. To verify their experiment, they used the so-called \textit{Linear Cross-Entropy Benchmark}, or Linear XEB. Specifically, for an $n$-qubit quantum circuit $C$ and samples $z_1,\ldots,z_k \in \{0,1\}^n$, the benchmark is given by:
$$b = \frac{2^n}{k} \cdot \sum_{i=1}^k |\braket{z_i|C|0^n}|^2.$$
The goal is for $b$ to be large with high probability over the choice of the random circuit and the randomness of the sampler, as this demonstrates that the observations tend to concentrate on the outputs that are more likely to be measured under the ideal distribution for $C$ (i.e. the noiseless distribution in which $z$ is measured with probability $|\braket{z|C|0^n}|^2$). We formalize this task as the $b$-XHOG task:

\begin{problem}[$b$-XHOG, or Linear Cross-Entropy Heavy Output Generation]
\label{prob:xhog}
Fix a distribution $\mathcal{D}$ over quantum circuits on $n$-qubits.\footnote{We will sometimes leave the choice of $\mathcal{D}$ implicit when it is clear from context.} Given a quantum circuit $C$ sampled from $\mathcal{D}$, output a sample $z \in \{0,1\}^n$ such that:
\[
\E_{C \sim \mathcal{D}}\left[|\braket{z|C|0^n}|^2\right] \ge \frac{b}{2^n}.
\]
\end{problem}

We emphasize that in this work, $b$-XHOG is fundamentally an average-case problem: it is always defined with respect to a distribution $\mathcal{D}$ over the choice of circuit $C$, which is why we require that the Linear XEB score exceeds $b$ \textit{in expectation}. Note also that the procedure for solving XHOG could itself be randomized (e.g. if it involves sampling $z$ from the output of a quantum device), so the expectation in the above definition is taken with respect to this randomness as well.


For the purposes of demonstrating quantum supremacy, we will typically think of ``large'' $b$ as any $b$ bounded away from $1$, as guessing $z$ uniformly at random achieves $b = 1$ (regardless of the distribution $\mathcal{D}$!), just because $\sum_{z \in \{0,1\}^n} |\braket{z|C|0^n}|^2 = 1$ for any $n$-qubit circuit $C$. On the other hand, sampling from the noiseless output distribution of $C$ achieves $b \approx 2$ when $\mathcal{D}$ selects random circuits of polynomial size and sufficient depth. Indeed, sampling from $C$ achieves $b \approx 2$ whenever the circuit distribution $\mathcal{D}$ empirically exhibits the \textit{Porter-Thomas} distribution on circuit output probabilities, in which the output probabilities are approximately i.i.d. exponential random variables \cite{AAB+19,AG19}. 

Under a strong complexity-theoretic conjecture about the classical hardness of nontrivially estimating output probabilities of quantum circuits, Aaronson and Gunn showed that no classical polynomial-time algorithm can solve $b$-XHOG for any $b \ge 1 + \frac{1}{\poly(n)}$ on random quantum circuits of polynomial size \cite{AG19}. Thus, a physical quantum computer that solves $b$-XHOG for $b \ge 1 + \Omega(1)$ is considered strong evidence of quantum computational supremacy.

In this work, we ask: can an efficient quantum algorithm for $b$-XHOG do substantially better than $b = 2$?\footnote{We thank Scott Aaronson \cite{Aar20} for raising this question, and for suggesting the analogy with the Tsirelson inequality.} That is, what is the largest $b$ for which a polynomial-time quantum algorithm can solve $b$-XHOG on random circuits? Note that the largest $b$ we could hope for is achieved by the optimal sampler that always outputs the string $z$ maximizing $|\braket{z|C|0^n}|^2$. If the random circuits induce a Porter-Thomas distribution on output probabilities, then this solves $b$-XHOG for $b = \Theta(n)$ (see \Cref{fact:max_XHOG} below). However, finding the largest output probability might be computationally difficult even on a quantum computer, which is why we restrict our attention to \textit{efficient} quantum algorithms.

We refer to our problem as the ``quantum supremacy Tsirelson inequality'' in reference to the Bell \cite{Bel64} and Tsirelson \cite{Tsi80} inequalities for quantum nonlocal correlations  (for a modern overview, see \cite{CHTW04}). Under this analogy, the quantity $b$ in XHOG plays a similar role as the probability $p$ of winning some nonlocal game. For example, the Bell inequality for the CHSH game \cite{CHSH69} states that no classical strategy can win the game with probability $p > \frac{3}{4}$; we view this as analogous to the conjectured inability of efficient classical algorithms to solve $b$-XHOG for any $b > 1$. By contrast, a quantum strategy with pre-shared entanglement allows players to win the CHSH game with probability $p = \cos^2\left(\frac{\pi}{8}\right) \approx 0.854 > \frac{3}{4}$. An experiment that wins the CHSH game with probability $p > \frac{3}{4}$, a violation of the Bell inequality, is analogous to an experimental demonstration of $b$-XHOG for $b > 1$ on a quantum computer that establishes quantum computational supremacy. Finally, the Tsirelson inequality for the CHSH game states that any quantum strategy involving arbitrary pre-shared entanglement wins with probability $p \le \cos^2\left(\frac{\pi}{8}\right)$. Hence, an upper bound on $b$ for efficient quantum algorithms is the quantum supremacy counterpart to the Tsirelson inequality. We emphasize that our choice to refer to this as a ``Tsirelson inequality'' is purely by analogy; we do not claim that the question involving quantum supremacy or the techniques one might use to answer it are otherwise related to quantum nonlocal correlations.





\subsection{Our Results}

We study the quantum supremacy Tsirelson inequality in the quantum query (or black box) model. That is, we consider $b$-XHOG with respect to distributions $\mathcal{D}$ that take the following form. We fix a quantum circuit $C$ that queries a classical or quantum oracle $\mathcal{O}$. To sample $C$ from $\mathcal{D}$, we choose $\mathcal{O}$ according to some distribution over oracles. We then ask what is the largest $b$ such that $b$-XHOG with respect to $\mathcal{D}$ is solvable with polynomially many queries to $\mathcal{O}$.

Our motivation for studying this problem in the query model is twofold. First, quantum query results often give useful intuition for what to expect in the real world, and can provide insight into why naive algorithmic approaches fail. Second, we view this as an interesting quantum query complexity problem in its own right. Whereas most other quantum query lower bounds involve decision problems \cite{Amb18} or relation problems \cite{Bel15}, XHOG is more like a weighted, average-case relation problem, because we only require that $|\braket{z|C|0^n}|^2$ be large \textit{on average}. Contrast this with the relation problem considered in \cite{AC17}, where the task is to output a $z$ such that $|\braket{z|C|0^n}|^2$ is greater than some threshold.

Note that there are known quantum query complexity lower bounds for relation problems \cite{AAB+19}, and even relation problems where the output is a quantum state \cite{AMRR11,LR20}. Yet, it is unclear whether existing quantum query lower bound techniques are useful here. Whereas the adversary method tightly characterizes the quantum query complexity of decision problems and state conversion problems \cite{LMRSS11}, it is not known to characterize the query complexity of relation problems, unless they are efficiently verifiable \cite{Bel15}. The adversary method appears to be essentially useless for saying anything about XHOG, which is not efficiently verifiable and is not a relation problem in the traditional sense.\footnote{As we will see later, however, the polynomial method \cite{BBCMdW01} plays an important role in one of our results.}

The XHOG task is well-defined for any distribution of random quantum circuits, so this gives us a choice in selecting the distribution. We focus on three classes of oracle circuits that either resemble random circuits used in practical experiments, or that were previously studied in the context of quantum supremacy. Formal definitions of these oracles (and the associated versions of XHOG) are given in \Cref{sec:oracles}.

\paragraph{Canonical State Preparation Oracles}
Because the linear cross-entropy benchmark for a circuit $C$ depends only on the state $\ket{\psi} := C\ket{0^n}$ produced by the circuit on the all zeros input, it is natural to consider an oracle $\Oracle_\psi$ that prepares a random state $\ket{\psi}$ without leaking additional information about $\ket{\psi}$. Formally, we choose a Haar-random $n$-qubit state $\ket{\psi}$, and fix a canonical state $\ket{\bot}$ orthogonal to all $n$-qubit states.\footnote{\label{foot:1}We can always assume that a convenient $\ket{\bot}$ exists by extending the Hilbert space, if needed. For example, if $\ket{\psi}$ is an $n$-qubit state, a natural choice is to encode $\ket{\psi}$ by $\ket{\psi}\ket{1}$ and to choose $\ket{\bot} = \ket{0^n}\ket{0}$.} Then, we take the oracle $\Oracle_\psi$ that acts as $\Oracle_\psi \ket{\bot} = \ket{\psi}$, $\Oracle_\psi \ket{\psi} = \ket{\bot}$, and $\Oracle_\psi \ket{\varphi} = \ket{\varphi}$ for any state $\ket{\varphi}$ that is orthogonal to both $\ket{\bot}$ and $\ket{\psi}$. Equivalently, $\Oracle_\psi$ is the reflection about the state $\frac{\ket{\psi} - \ket{\bot}}{\sqrt{2}}$. Finally, we let $C$ be the composition of $\Oracle_\psi$ with any unitary that sends $\ket{0^n}$ to $\ket{\bot}$, so that $C\ket{0^n} = \ket{\psi}$. This model is often chosen when proving lower bounds for quantum algorithms that query state preparation oracles \cite{ARU14,AKKT20,BR20}, in part because the ability to simulate $\Oracle_\psi$ follows in a completely black box manner from the ability to prepare $\ket{\psi}$ unitarily without garbage (see \Cref{lem:prep_implies_canonical} below). Hence, the oracle $\Oracle_\psi$ is ``canonical'' in the sense that it is uniquely determined by $\ket{\psi}$ and is not any more powerful than any other oracle that prepares $\ket{\psi}$ without garbage.

\paragraph{Haar-Random Unitaries}
A random polynomial-size quantum circuit $C$ does not behave like a canonical state preparation oracle: $C\ket{x}$ looks like a random quantum state for \textit{any} computational basis state $\ket{x}$, not just $x = 0^n$. Indeed, random quantum circuits are known to information-theoretically approximate the Haar measure in certain regimes \cite{BHH12,HM18}, and it seems plausible that they are also computationally difficult to distinguish from the Haar measure. Thus, one could alternatively model random quantum circuits by Haar-random $n$-qubit unitaries. 

\paragraph{Fourier Sampling Circuits}
Finally, we consider quantum circuits that query a random \textit{classical} oracle. For this, we use $\FS$ circuits, which Aaronson and Chen \cite{AC17} previously studied in the context of proving oracular quantum supremacy for a problem related to XHOG. $\FS$ circuits are defined as $H^{\otimes n}U_f H^{\otimes n}$, where $U_f$ is a phase oracle for a uniformly random Boolean function $f : \{0,1\}^n \to \{-1,1\}$. On the all-zeros input, $\FS$ circuits output a string $z \in \{0,1\}^n$ with probability proportional to the squared Fourier coefficient $\hat{f}(z)^2$. This model has the advantage that in principle, one can prove the corresponding quantum supremacy Bell inequality for classical algorithms given query access to $f$, and that in some cases one can replace $f$ by a pseudorandom function to base quantum supremacy on cryptographic assumptions \cite{AC17}.\\

Our first result is an exponential lower bound on the number of quantum queries needed to solve $(2 + \eps)$-XHOG given either of the two types of quantum oracles that we consider:

\begin{theorem}[Informal version of \Cref{thm:XHOG_O_psi} and \Cref{thm:XHOG_random}]
\label{thm:XHOG_main_informal}
For any $\eps \ge \frac{1}{\poly(n)}$, any quantum query algorithm for $(2 + \eps)$-XHOG with query access to either:
\begin{enumerate}[(1)]
\item a canonical state preparation oracle $\Oracle_\psi$ for a Haar-random $n$-qubit state $\ket{\psi}$, or
\item a Haar-random $n$-qubit unitary,
\end{enumerate} requires at least $\Omega\left(\frac{2^{n/4}}{\poly(n)}\right)$ queries.
\end{theorem}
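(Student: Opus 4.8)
The plan is to reduce the query model to an information-theoretic question about \emph{copies} of the relevant Haar-random state, and then to resolve that question using the second moment of the Haar measure. Write $\ket\psi = C\ket{0^n}$, and in the canonical-oracle model set $\ket v := \tfrac{1}{\sqrt 2}(\ket\psi - \ket\bot)$, so $\Oracle_\psi = \id - 2\ket v\bra v$ is the reflection about $\ket v$. I would establish two lemmas. (i) \emph{Simulation:} any $t$-query algorithm can be simulated, up to trace distance $\delta$ in its final state, by an algorithm given only $k = O(t^2/\delta)$ copies of $\ket v$ (canonical-oracle case) or of $\ket\psi$ (Haar-unitary case), and no other information about $\psi$. (ii) \emph{Copy lower bound:} every measurement strategy on $k$ such copies has expected Linear XEB score at most $\tfrac{1}{2^n}(2 + O(k^2/2^n))$. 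Chaining these proves the theorem: given a $t$-query algorithm for $(2+\eps)$-XHOG, take $\delta = \Theta(\eps/n)$, which is $\ge 1/\poly(n)$; since $\max_z |\braket{z|\psi}|^2 = O(n/2^n)$ for all but a $2^{-\Omega(n)}$ fraction of $\psi$, this $\delta$ perturbs the XEB score by at most $\tfrac{\eps}{2}\cdot 2^{-n}$, so (i) yields an algorithm using $O(t^2\poly(n))$ copies of score $\ge (2+\eps/2)/2^n$; then (ii) forces $\eps \le O(k^2/2^n) = O(t^4\poly(n)/2^n)$, i.e.\ $t \ge \Omega(2^{n/4}/\poly(n))$ once $\eps \ge 1/\poly(n)$.

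\textbf{The simulation lemma.} Since $\Oracle_\psi$ is the reflection $e^{i\pi\ket v\bra v}$ (Hamiltonian $\ket v\bra v$ for time $\pi$), density-matrix exponentiation implements it to diamond-norm error $\eta$ from $O(1/\eta)$ copies of $\ket v$; simulating all $t$ queries, each to error $\delta/t$, costs $O(t/\delta)$ copies apiece and $O(t^2/\delta)$ in total (the initial preparation $C\ket{0^n}=\ket\psi$ is just the first such reflection applied to the known state $\ket\bot$). For a Haar-random unitary $U$ I would first note that, because XHOG scores only on $\ket\psi = U\ket{0^n}$, it suffices to simulate a Haar-random $U$ \emph{conditioned on} $U\ket{0^n}=\ket\psi$; such a $U$ is a fixed map $\ket{0^n}\mapsto\ket\psi$ direct-summed with a Haar-random isometry of the orthogonal complement of $\ket{0^n}$ onto that of $\ket\psi$, which can be implemented query-by-query by sampling the isometry lazily (using the simulator's own randomness) together with $O(1)$ reflections about the known state $\ket{0^n}$ and about $\ket\psi$ per query — the latter, again by density-matrix exponentiation, from $O(1/\eta)$ copies of $\ket\psi$, for $O(t^2/\delta)$ copies overall. (Copies of $\ket\psi$ do not suffice in the canonical-oracle case: they cannot reproduce the $\ket\bot\leftrightarrow\ket\psi$ coherence that $\Oracle_\psi$ implements, which is why one works with copies of $\ket v$ there.)

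\textbf{The copy lower bound.} Using $\E_\psi[(\ket\psi\bra\psi)^{\otimes(k+1)}] = \Pi_{\mathrm{sym}}^{(k+1)}/\binom{2^n+k}{k+1}$, the expected XEB score of a POVM $\{M_z\}$ on $k$ copies of $\ket\psi$ equals $\tfrac{1}{\binom{2^n+k}{k+1}}\,\mathrm{Tr}[(\sum_z M_z\otimes\ket z\bra z)\,\Pi_{\mathrm{sym}}^{(k+1)}]$, the last register carrying the output $z$. Expanding $\Pi_{\mathrm{sym}}^{(k+1)} = \tfrac{1}{(k+1)!}\sum_{\pi\in S_{k+1}} W_\pi$, the permutations that fix the output register contribute, after $\sum_z M_z = \id$, exactly $\tfrac{1}{2^n+k}$ independently of $\{M_z\}$, while those placing the output register in a cycle of length $\ge 2$ contribute the remaining $\approx\tfrac{1}{2^n}$ plus an $O(k^2/2^{2n})$ correction — there one uses $0\preceq M_z\preceq\id$ together with the fact that such a term forces the label $z$ of $M_z$ to agree with a permuted copy, which (after a birthday-type count showing all $k$ copies are distinct outside an $O(k^2/2^n)$ fraction of index tuples) costs a factor $2^{-n}$. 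This gives $\tfrac{1}{2^n}(2 + O(k^2/2^n))$; the $k=1$ case recovers $\tfrac{2\cdot 2^n}{2^n+1}<2$, and the $+O(k^2/2^n)$ term is attained by measuring all $k$ copies and outputting a collision when one occurs. For the canonical-oracle model one runs the analogous computation with copies of $\ket v$: expanding $\ket v = \tfrac{1}{\sqrt2}(\ket\psi-\ket\bot)$, every term with $\ket\bot$ in the output slot vanishes (as $\braket{z|\bot}=0$), the dominant surviving term reduces to the same symmetric-projector expression, and one checks the remaining $\ket\bot$-laden terms do not exceed the stated bound.

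\textbf{Main obstacle.} I expect the crux to be the copy lower bound. First, the symmetric-group estimate must reach the \emph{rate} $O(k^2/2^n)$, not merely $o(1)$: since $k$ will be nearly $2^{n/2}$, a loose bound is useless, so the subleading permutation terms need careful control. Second, the $\ket\bot$-bookkeeping in the canonical-oracle variant is delicate — the $2^{-(k+1)}$ prefactor from expanding $(\ket v\bra v)^{\otimes(k+1)}$ is offset by $\binom{2k}{k}$ mixed $\ket\psi$/$\ket\bot$ cross terms, all of the same sign, so one must argue that, after tracing against $\sum_z M_z\otimes\ket z\bra z$, these collectively stay below $\tfrac{2+O(k^2/2^n)}{2^n}$. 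A secondary difficulty is making the lazy simulation of a Haar-random unitary — conditioned on its action at $\ket{0^n}$ — fully rigorous within the stated copy budget.
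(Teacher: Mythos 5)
Your proposal shares the paper's high-level plan — convert $T$ queries into $\mathrm{poly}(T)$ copies of a state that depends on $\ket\psi$, then prove an information-theoretic bound on the XEB achievable from copies — but uses genuinely different machinery at both steps. For the simulation step, the paper invokes a lemma of Ambainis, Rosmanis, and Unruh \cite{ARU14}, which gives trace distance $O(T/\sqrt k)$ from $k$ copies of states of the form $\alpha_j\ket\psi+\beta_j\ket\bot$; your density-matrix-exponentiation route using $\Oracle_\psi = e^{i\pi\ket v\bra v}$ is a legitimate (and in principle slightly more efficient) alternative, modulo standard care with controlled and inverse queries and error accumulation. For the Haar-unitary case, the paper gives an explicit diamond-norm reduction (\Cref{thm:standard->random}) from $U_\psi$-queries to $\Oracle_\psi$-queries, whereas you propose lazy sampling of $U$ conditioned on $U\ket{0^n}=\ket\psi$; both are plausible, though the paper's construction is already concrete in the query model.

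The genuine gap is in the copy lower bound. Your symmetric-subspace expansion $\E_\psi[(\ket\psi\bra\psi)^{\otimes(k+1)}] = \Pi_{\mathrm{sym}}^{(k+1)}/\binom{2^n+k}{k+1}$ is a clean way to bound the XEB from $k$ copies of $\ket\psi$, and I expect the permutation bookkeeping can be made to give the rate $O(k^2/2^n)$. But your own simulation hands you copies of $\ket v=(\ket\psi-\ket\bot)/\sqrt 2$, not of $\ket\psi$, and $\E_\psi\bigl[(\ket v\bra v)^{\otimes k}\otimes\ket\psi\bra\psi\bigr]$ is \emph{not} a symmetric projector: the cross terms with mismatched $\ket\psi/\ket\bot$ placements (those with equal numbers of $\ket\psi$ in kets and bras, which survive phase averaging) do not fold into $\Pi_{\mathrm{sym}}$, and you explicitly flag them as unresolved. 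This is exactly the part the paper handles with a structural trick your proposal lacks: \Cref{lem:diagonal_symmetrization} shows that, after averaging over random diagonal phases (which leaves the XEB score invariant because the Haar measure is phase-invariant), the state $\bigotimes_j(\alpha_j\ket\psi+\beta_j\ket\bot)$ is \emph{exactly} preparable from computational-basis measurements of $k$ copies of $\ket\psi$. After that reduction the copy lower bound (\Cref{lem:XHOG_with_states}) is a short classical posterior/collision calculation with the Dirichlet prior, with no $\ket\bot$ anywhere in sight. Without the phase-symmetrization lemma or an equivalent, your central inequality for $\ket v$ copies is asserted rather than proved, and the argument does not close.
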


Recall that, because Haar-random states induce a Porter-Thomas distribution on measurement probabilities, the naive algorithm that outputs a sample from the measurement distribution of the state solves $b$-XHOG for $b \approx 2$. Hence, in the black box setting, \Cref{thm:XHOG_main_informal} implies that it is computationally difficult to substantially beat the naive algorithm for XHOG. We do not know if \Cref{thm:XHOG_main_informal} is quantitatively optimal, but we show in \Cref{thm:XHOG_collision} that a simple algorithm based on the quantum collision finding algorithm \cite{BHT97} solves $(2 + \Omega(1))$-XHOG using $O\left(2^{n/3}\right)$ queries to either oracle. 

Finally, we show that for $\FS$ circuits, the naive algorithm of simply running the circuit is optimal among all $1$-query algorithms:

\begin{theorem}[Informal version of \Cref{thm:fourier_xhog}]
Any $1$-query quantum algorithm for $b$-XHOG with $\FS$ circuits achieves $b \le 3$.
\end{theorem}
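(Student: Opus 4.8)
The plan is to reduce the expected Linear XEB score of an arbitrary $1$-query algorithm to a one-line inequality, via a second-moment-style computation over the random function $f$. First I would put the algorithm in canonical form. A query to a $\FS$ circuit $C = H^{\otimes n}U_f H^{\otimes n}$ is, up to the fixed gates $H^{\otimes n}$ (which are free), a single application of the phase oracle $U_f : \ket{x} \mapsto f(x)\ket{x}$ on a designated $n$-qubit register. Absorbing everything before the query into a fixed unit vector $\ket{\psi} = \sum_{x \in \{0,1\}^n} \ket{x}\ket{w_x}$ (so $\sum_x \|\ket{w_x}\|^2 = 1$, with the $\ket{w_x}$ possibly subnormalized workspace states) and everything after it into a POVM $\{M_z\}_{z \in \{0,1\}^n}$, the algorithm on oracle $f$ outputs $z$ with probability
\[
p(z \mid f) = \bra{\psi_f} M_z \ket{\psi_f}, \qquad \ket{\psi_f} := (U_f \otimes \id)\ket{\psi} = \sum_x f(x)\ket{x}\ket{w_x},
\]
and we must bound $b = 2^n \cdot \E_f\!\left[\sum_z p(z \mid f)\, \hat{f}(z)^2\right]$, where I have used $\abs{\braket{z|C|0^n}}^2 = \hat{f}(z)^2$ with $\hat f(z) = 2^{-n}\sum_x (-1)^{x \cdot z} f(x)$.

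Second, I would expand $\hat{f}(z)^2 = 4^{-n}\sum_{a,b} f(a)f(b)(-1)^{(a \oplus b)\cdot z}$ and $p(z \mid f) = \sum_{x,y} f(x)f(y)\bra{x,w_x}M_z\ket{y,w_y}$, multiply, and take the expectation over the uniformly random $f : \{0,1\}^n \to \{-1,1\}$ using
\[
\E_f[f(x)f(y)f(a)f(b)] = [x{=}y][a{=}b] + [x{=}a][y{=}b] + [x{=}b][y{=}a] - 2[x{=}y{=}a{=}b].
\]
Summing the result over $z$ and using $\sum_z M_z = \id$ together with $\sum_x \|\ket{w_x}\|^2 = 1$, the $[x{=}y][a{=}b]$ term contributes $\tfrac{1}{2^n}$, the $-2[x{=}y{=}a{=}b]$ term contributes $-\tfrac{2}{4^n}$, and the two ``swap'' terms each contribute $\tfrac{1}{4^n} S$, where
\[
S := \sum_z \sum_{x,y} (-1)^{(x\oplus y)\cdot z}\bra{x,w_x}M_z\ket{y,w_y} = \sum_z \bra{\psi_z} M_z \ket{\psi_z}, \quad \ket{\psi_z} := \Big(\!\sum_x (-1)^{x\cdot z}\ket{x}\bra{x} \otimes \id\Big)\ket{\psi}.
\]
Hence $b = 1 + \tfrac{2}{2^n}S - \tfrac{2}{2^n}$.

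Third, the punchline: $\ket{\psi_z}$ is precisely the post-query state $\ket{\psi_f}$ for $f = \chi_z$, the character $\chi_z(x) = (-1)^{x\cdot z}$, so $\bra{\psi_z}M_z\ket{\psi_z} = p(z \mid \chi_z)$ is the probability that the algorithm outputs $z$ when run against the oracle $\chi_z$. Since this is a probability, $\bra{\psi_z}M_z\ket{\psi_z} \le 1$ for every $z$, and therefore $S \le 2^n$. (Alternatively, each $M_z \preceq \id$ because the POVM elements are positive and sum to $\id$, so $\sum_z (Z_z \otimes \id)M_z(Z_z\otimes\id) \preceq 2^n\id$ and $S = \bra{\psi}\big(\sum_z (Z_z\otimes\id)M_z(Z_z\otimes\id)\big)\ket{\psi} \le 2^n$, where $Z_z = \sum_x(-1)^{x\cdot z}\ket{x}\bra{x}$.) Plugging $S \le 2^n$ into the formula gives $b \le 3 - \tfrac{2}{2^n} < 3$. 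Finally I would note tightness: for the naive algorithm ($\ket{\psi} = 2^{-n/2}\sum_x\ket{x}$ with trivial workspace and $M_z = H^{\otimes n}\ket{z}\bra{z}H^{\otimes n}$) one has $\ket{\psi_z} = H^{\otimes n}\ket{z}$ and $\bra{\psi_z}M_z\ket{\psi_z} = 1$, so $S = 2^n$ and $b = 3 - \tfrac{2}{2^n}$, matching the upper bound up to an $O(2^{-n})$ additive term.

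The step I expect to be most delicate is the second one: there is no deep idea, but the bookkeeping must be done carefully — tracking the three pairing contributions and the $-2[x{=}y{=}a{=}b]$ correction, keeping the $2^n$ versus $4^n$ factors straight, and remembering that the $\bra{x,w_x}M_z\ket{y,w_y}$ are inner products involving (subnormalized) workspace states, so that $S$ is an honest expectation value $\bra{\psi}(\,\cdot\,)\ket{\psi}$ of a positive operator rather than a bare double sum. A minor point to nail down in the first step is that the reduction to canonical form genuinely loses no generality: arbitrary (not necessarily projective) output measurements and arbitrary workspace are allowed, and a single query to $C$ really is one application of $U_f$ on $n$ qubits, so no controlled-oracle subtleties enter.
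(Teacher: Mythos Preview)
Your proof is correct and takes a genuinely different route from the paper's. The paper uses the polynomial method to write the output probabilities as degree-$2$ multilinear polynomials in the bits $f(x)$, sets up a linear program whose objective is the expected XEB score, symmetrizes, and then certifies $b \le 3 - \tfrac{2}{2^n}$ by exhibiting an explicit dual solution built from the Fourier coefficients of $\Half_N$, the indicator of balanced Boolean functions. You bypass all of this machinery with a direct fourth-moment computation: expanding $\E_f\!\left[p(z\mid f)\,\hat f(z)^2\right]$ via the Rademacher pairing formula yields the exact identity $b = 1 + \tfrac{2}{2^n}\bigl(\sum_z p(z \mid \chi_z) - 1\bigr)$, after which the bound follows from the trivial $p(z \mid \chi_z) \le 1$. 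Your argument is shorter and more elementary, and it reveals a clean operational interpretation that the LP proof does not: the XEB score of any $1$-query algorithm is determined entirely by how well it identifies the character oracles, with the maximum attained exactly when $p(z\mid\chi_z)=1$ for every $z$ (as the naive \FS{} circuit does). The paper's LP framework, by contrast, is set up with an eye toward $T>1$ queries via higher-degree polynomials, though the paper itself notes the dual becomes unwieldy there. One small caveat: the paper's query model (Section~2.2) allows controlled-$U_f$, so your canonical form is slightly too restrictive and the remark ``no controlled-oracle subtleties enter'' needs a line of justification; writing the post-query state as $\ket{\phi_0}+(U_f\otimes\id)\ket{\phi_1}$ with $\|\phi_0\|^2+\|\phi_1\|^2=1$, the linear-in-$f$ cross term has vanishing third moment, the constant term contributes $2^{-n}\|\phi_0\|^2$, and your computation on the quadratic term goes through with $\|\phi_1\|^2$ in place of $1$, still giving $b \le 3 - \tfrac{2}{2^n}$.
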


Note that the value of $b$ achieved by the naive quantum algorithm for XHOG depends on the distribution of circuits used. In contrast to Haar-random circuits that achieve $b \approx 2$, $\FS$ circuits achieve $b \approx 3$ (see \Cref{prop:fs_xhog}). This stems from the fact that the amplitudes of a Haar-random quantum state are approximately distributed as \textit{complex} normal random variables, whereas the amplitudes of a state produced by a random $\FS$ circuit are approximately distributed as \textit{real} normal random variables.

\subsection{Our Techniques}

The starting point for our proof of the Tsirelson inequality with a canonical state preparation oracle $\Oracle_\psi$ is a result of Ambainis, Rosmanis, and Unruh \cite{ARU14}. It shows that any algorithm that queries $\Oracle_\psi$ can be approximately simulated by a different algorithm that makes no queries, but starts with copies of a resource state that depends on $\ket{\psi}$. This resource state consists of polynomially many (in the number of queries to $\Oracle_\psi$) states of the form $\alpha\ket{\psi} + \beta\ket{\bot}$, i.e. copies of $\ket{\psi}$ in superposition with $\ket{\bot}$. Our strategy is to show that if any algorithm solves $b$-XHOG given this resource state, then a similar algorithm solves $b$-XHOG given copies of $\ket{\psi}$ alone. Then, we prove a lower bound on the number of copies of $\ket{\psi}$ needed to solve $b$-XHOG. To do so, we argue that if $\ket{\psi}$ is Haar-random, then the best algorithm for $b$-XHOG given copies of $\ket{\psi}$ is a simple collision-finding algorithm: measure all copies of $\ket{\psi}$ in the computational basis, and output whichever string $z \in \{0,1\}^n$ appears most frequently in the measurement results. For a Haar-random $n$-qubit state, the chance of seeing \textit{any} collisions is exponentially unlikely, unless the number of copies of $\ket{\psi}$ is exponentially large in $n$, and so this does not do much better than measuring a single copy of $\ket{\psi}$ and outputting the result.

To prove the analogous lower bound for $b$-XHOG with a Haar-random unitary oracle, we show more generally that the canonical state preparation oracles and Haar-random unitary oracles are essentially equivalent as resources, which may be of independent interest. More specifically, we show that for an $n$-qubit state $\ket{\psi}$, using a constant number of queries to $\Oracle_\psi$, one can approximately simulate (to exponential precision) queries to a random oracle that prepares $\ket{\psi}$. By ``random oracle that prepares $\ket{\psi}$,'' we mean an $n$-qubit unitary $U_\psi$ that acts as $U_\psi\ket{0^n} = \ket{\psi}$ but Haar-random everywhere else. We can construct such a $U_\psi$ by taking an arbitrary $n$-qubit unitary that maps $\ket{0^n}$ to $\ket{\psi}$, then composing it with a Haar-random unitary on the $\left(2^n - 1\right)$-dimensional subspace orthogonal to $\ket{0^n}$.

Our lower bound for $\FS$ circuits uses an entirely different technique. We use the polynomial method of Beals et al. \cite{BBCMdW01}, which shows that for any quantum algorithm that makes $T$ queries to a classical oracle, the output probabilities of the algorithm can be expressed as degree-$2T$ polynomials in the variables of the classical oracle. Our key observation is that the average linear XEB score achieved by such a quantum query algorithm can \textit{also} be expressed as a polynomial in the variables of the classical oracle. We further observe that this polynomial is constrained by the requirement that the polynomials representing the output probabilities must be nonnegative and sum to $1$. This allows us to upper bound the largest linear XEB score achievable by the maximum value of a certain linear program, whose variables are the coefficients of the polynomials that represent the output probabilities of the algorithm. To upper bound this quantity, we exhibit a solution to the dual linear program.


\section{Preliminaries}
\label{sec:prelim}

\subsection{Notation}
We use $[N]$ to denote the set $\{1,2,\ldots,N\}$. We use $\id$ to denote the identity matrix (of implicit size). We let $\mathrm{TD}(\rho, \sigma)$ denote the trace distance between density matrices $\rho$ and $\sigma$, and let $\diamondnorm{A}$ denote the diamond norm of a superoperator $A$ acting on density matrices (see \cite{AKN98} for definitions). For a unitary matrix $U$, we use $U \cdot U^\dagger$ to denote the superoperator that maps $\rho$ to $U\rho U^\dagger$. In a slight abuse of notation, if $A$ denotes a quantum algorithm (which may consist of unitary gates, measurements, oracle queries, and initialization of ancilla qubits), then we also use $A$ to denote the superoperator corresponding to the action of $A$ on input density matrices.

\subsection{Oracles for Quantum States}
\label{sec:oracles}
We frequently consider quantum algorithms that query quantum oracles. In this model, a query to a unitary matrix $U$ consists of a single application of either $U$, $U^\dagger$, or controlled versions of $U$ or $U^\dagger$. We also consider quantum algorithms that make queries to \textit{random} oracles. In analogue with the classical random oracle model, such calls are not randomized at each query. Rather, a unitary $U$ is chosen randomly (from some distribution) at the start of the execution of the algorithm, and thereafter all queries for the duration of the algorithm are made to $U$.

We now define several types of unitary oracles that we will use. These definitions (and associated lemmas giving constructions of them) have appeared implicitly or explicitly in prior work, e.g. \cite{ARU14,AKKT20,BR20,ABC+20}. For completeness, we provide proofs of the constructions.

\begin{definition}
For an $n$-qubit quantum state $\ket{\psi}$, the \emph{reflection about $\ket{\psi}$}, denoted $\Refl_\psi$, is the $n$-qubit unitary $\Refl_\psi := \id - 2\ket{\psi}\bra{\psi}$.
\end{definition}

In other words, $\ket{\psi}$ is a $-1$ eigenstate of $\Refl_\psi$, and all states orthogonal to $\ket{\psi}$ are $+1$ eigenstates. Note that some authors define the reflection about $\ket{\psi}$ to be the negation of this operator (e.g. \cite{MNRS07,Rei11,ABC+20}), while others follow our convention (e.g. \cite{BHMT02,KLL+17,AKKT20}). This makes little difference, as these definitions are equivalent up to a global phase (or, if using the controlled versions, equivalent up to a Pauli $Z$ gate).

The following lemma shows that $\Refl_\psi$ can be simulated given any unitary that prepares $\ket{\psi}$ from the all-zeros state, possibly with unentangled garbage.

\begin{lemma}
\label{lem:prep_implies_refl}
Let $U$ be a unitary that acts as $U\ket{0^n}\ket{0^m} = \ket{\psi}\ket{\varphi}$, where $\ket{\psi}$ and $\ket{\varphi}$ are $n$- and $m$-qubit states, respectively. Then one can simulate $T$ queries to the reflection $\Refl_\psi$ using $2T + 1$ queries to $U$.
\end{lemma}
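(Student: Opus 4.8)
The plan is to reduce the reflection $\Refl_\psi = \id - 2\ket{\psi}\bra{\psi}$ to a reflection about a computational basis state, which can be implemented without any oracle queries, by conjugating with $U$ (and $U^\dagger$). The natural candidate is the operator $V := U\left(\id - 2\ket{0^n}\bra{0^n}\otimes\id_m\right)U^\dagger$, which uses exactly two queries to $U$ (one for $U$, one for $U^\dagger$) and acts on the $(n+m)$-qubit space. Since $U\ket{0^n}\ket{0^m} = \ket{\psi}\ket{\varphi}$, this $V$ is precisely the reflection $\id - 2\ket{\psi}\bra{\psi}\otimes\ket{\varphi}\bra{\varphi}$ about the state $\ket{\psi}\ket{\varphi}$ on the joint register. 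First I would verify this identity, which is immediate from the fact that conjugating a reflection $\id - 2\ket{w}\bra{w}$ by a unitary $W$ yields $\id - 2(W\ket{w})(W\ket{w})^\dagger$.

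Next I would explain how to use $V$ together with the ancilla register to simulate $\Refl_\psi$ acting on an arbitrary $n$-qubit input state $\ket{\xi}$. The idea is to initialize a fresh $m$-qubit ancilla in $\ket{0^m}$, but then to first apply $U$ restricted appropriately — more precisely, the cleanest route is to observe that we want the ancilla to be in the state $\ket{\varphi}$ so that $V$ acts as $\Refl_\psi\otimes\id$ on the span of $\ket{\xi}\ket{\varphi}$. But since $\ket{\varphi}$ is unknown, we instead prepare $\ket{\varphi}$ with one more query: apply $U$ to $\ket{0^n}\ket{0^m}$ to get $\ket{\psi}\ket{\varphi}$ — no, this entangles the two registers. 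The correct approach: since $\ket{\varphi}$ is an $m$-qubit state \emph{independent of the input}, note that $\ket{\psi}\ket{\varphi} = U\ket{0^n}\ket{0^m}$ means $\ket{\varphi}$ can be prepared on the ancilla alone as follows — actually the simplest fix is that $\ket{\varphi}$ need not be prepared at all. Instead, run the whole computation in a space where we keep one copy of $\ket{\psi}\ket{\varphi}$: but the lemma wants $\Refl_\psi$ on an \emph{arbitrary} state supplied by the caller. So the right construction is: given the caller's register holding $\ket{\xi}$, adjoin ancillas $\ket{0^n}\ket{0^m}$, apply $U$ to the ancillas to obtain $\ket{\xi}\ket{\psi}\ket{\varphi}$, then SWAP the caller's $n$-qubit register with the first $n$ ancilla qubits so the roles are exchanged, then apply $V$ on the (now-swapped) ancilla-plus-$m$ block, then SWAP back, then apply $U^\dagger$ to disentangle and discard the ancillas. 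Counting: each simulated query to $\Refl_\psi$ costs the $U$ for ancilla prep, the two queries inside $V$, and the $U^\dagger$ to clean up — that is $4$ queries, which is worse than claimed. I would instead amortize: prepare $\ket{\varphi}$ \emph{once} at the start (cost: this is where a single extra query is paid, the ``$+1$'' in $2T+1$), keep it around, and then each of the $T$ reflections costs only the two queries inside $V$ — giving $2T+1$ total. Concretely, I will show $\ket{\varphi}$ can be extracted with one application of $U$ followed by uncomputation that does not disturb it, or more cleanly that $V$ acts correctly provided the ancilla is left in $\ket{\varphi}$ after each use, which it is, since $V$ fixes $\ket{\psi}\ket{\varphi}$ and more generally preserves the factorization on the subspace $\mathrm{span}(\ket{\cdot})\otimes\ket{\varphi}$.

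The main obstacle is precisely this bookkeeping: showing that a single preparation of the garbage state $\ket{\varphi}$ suffices and that it is returned intact after every simulated reflection, so that the $T$ queries do not each incur the overhead of re-preparing and un-preparing $\ket{\varphi}$. The key structural fact making this work is that $V = \Refl_\psi \otimes (\id - 2\ket{\varphi}\bra{\varphi})$ decomposes as a tensor product of a reflection on the data register and a reflection on the garbage register; since $\ket{\varphi}$ is a $-1$-eigenstate of its factor, applying $V$ to $\ket{\eta}\ket{\varphi}$ for any $\ket{\eta}$ yields $(\Refl_\psi\ket{\eta})\ket{\varphi}$ up to the global sign $-1$, which is harmless (or removable via the controlled version as the remark after the definition notes). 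Thus one prep of $\ket{\varphi}$ at the cost of a single query, $2T$ queries for the $T$ applications of $V$, and no final cleanup needed beyond discarding $\ket{\varphi}$, totals $2T+1$ queries to $U$, as claimed. I would close by remarking that controlled reflections are handled identically, since $V$ is built from controllable operations and the global phase becomes a controlled phase that can be absorbed.
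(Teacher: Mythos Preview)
Your overall strategy is the paper's: prepare the garbage state $\ket{\varphi}$ once (one query), then implement each $\Refl_\psi$ by conjugating a query-free computational-basis reflection with $U$ and $U^\dagger$ (two queries each), for $2T+1$ total. The execution, however, contains a real error.

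You set $V := U\bigl(\id - 2\ket{0^n}\bra{0^n}\otimes\id_m\bigr)U^\dagger$ and then assert first that $V = \id - 2\ket{\psi}\bra{\psi}\otimes\ket{\varphi}\bra{\varphi}$ and later that $V = \Refl_\psi\otimes(\id - 2\ket{\varphi}\bra{\varphi})$. Neither identity holds, and the two are not even consistent with each other. Your middle operator is \emph{not} of the form $\id - 2\ket{w}\bra{w}$: it reflects about the entire $2^m$-dimensional subspace on which the first register is $\ket{0^n}$, so conjugation by $U$ produces $\id - 2\,U(\ket{0^n}\bra{0^n}\otimes\id_m)U^\dagger$, the reflection about the image subspace $U(\ket{0^n}\otimes(\cdot))$. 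All you know about that subspace is that it contains $\ket{\psi}\ket{\varphi}$; there is no reason it respects the tensor factorization. The ``key structural fact'' $V = \Refl_\psi\otimes\Refl_\varphi$ is likewise false for any choice of middle operator you can build without queries: note that $(\id-2\ket{\psi}\bra{\psi})\otimes(\id-2\ket{\varphi}\bra{\varphi})$ expands to $\id - 2\ket{\psi}\bra{\psi}\otimes\id - 2\id\otimes\ket{\varphi}\bra{\varphi} + 4\ket{\psi}\bra{\psi}\otimes\ket{\varphi}\bra{\varphi}$, which is not a rank-one perturbation of $\id$.

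The fix, which is exactly what the paper does, is to reflect about the single basis state $\ket{0^{n+m}}$ in the middle: take $V := U\bigl(\id - 2\ket{0^{n+m}}\bra{0^{n+m}}\bigr)U^\dagger = \id - 2\ket{\psi}\ket{\varphi}\bra{\psi}\bra{\varphi}$. This $V$ does \emph{not} tensor-factor, but a direct expansion shows $V(\ket{\eta}\ket{\varphi}) = \ket{\eta}\ket{\varphi} - 2\braket{\psi|\eta}\braket{\varphi|\varphi}\ket{\psi}\ket{\varphi} = (\Refl_\psi\ket{\eta})\ket{\varphi}$ for every $n$-qubit $\ket{\eta}$, with no global sign to absorb. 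With this correction your counting ($1$ query to prepare $\ket{\psi}\ket{\varphi}$ on an ancilla and thereby obtain $\ket{\varphi}$, then $2$ queries per simulated reflection) goes through unchanged.
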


\begin{proof}
Consider the unitary $U(\id - 2\ket{0^{m+n}}\bra{0^{m+n}})U^\dagger$. For any $n$-qubit state $\ket{x}$, the action of this unitary on $\ket{x}\ket{\varphi}$ is equivalent to the action of $\Refl_\psi$ on $\ket{x}\ket{\varphi}$. So, we can simulate $\Refl_\psi$ on $\ket{x}$ as follows: first use one query to $U$ to prepare $\ket{\psi}\ket{\varphi}$ from $\ket{0^{m+n}}$, so that we have a copy of $\ket{\varphi}$. Then, simulate each query to $\Refl_\psi$ using a query to $U$ and $U^\dagger$ to perform $U(\id - 2\ket{0^{m+n}}\bra{0^{m+n}})U^\dagger$ applied to $\ket{x}\ket{\varphi}$, using the copy of $\ket{\varphi}$ prepared in the first step.
\end{proof}

\begin{definition}
For a quantum state $\ket{\psi}$, the \emph{canonical state preparation oracle for $\ket{\psi}$}, denoted $\Oracle_\psi$, is the reflection about the state $\frac{\ket{\psi} - \ket{\bot}}{\sqrt{2}}$, where $\ket{\bot}$ is some canonical state orthogonal to $\ket{\psi}$.
\end{definition}

Unless otherwise specified, we generally assume that if $\ket{\psi}$ is an $n$-qubit state, then $\ket{\bot}$ is orthogonal to the space of $n$-qubit states under a suitable encoding (see \Cref{foot:1}).

The next lemma shows that $\Oracle_\psi$ can be simulated from \textit{any} oracle that prepares $\ket{\psi}$ without garbage:

\begin{lemma}
\label{lem:prep_implies_canonical}
Let $U$ be an $n$-qubit unitary that satisfies $U\ket{0^n} = \ket{\psi}$. Then one can simulate $T$ queries to $\Oracle_\psi$ using $4T + 2$ queries to $U$.
\end{lemma}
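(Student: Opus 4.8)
The plan is to realize the reflection $\Oracle_\psi$ as a short conjugation: sandwich a ``free'' reflection (one requiring no queries to $U$) between a unitary built from a single controlled query to $U$ and its inverse. Adopting the encoding of \Cref{foot:1}, I would take $\Oracle_\psi = \id - 2\ket{\phi}\bra{\phi}$ acting on $n+1$ qubits, where $\ket{\phi} = \tfrac{1}{\sqrt{2}}\bigl(\ket{\psi}\ket{1} - \ket{0^n}\ket{0}\bigr)$ and the last qubit is a flag. Let $cU$ be the legal query that applies $U$ to the first $n$ qubits conditioned on the flag being $\ket{1}$ (a controlled version of $U$). Since $cU\bigl(\ket{0^n}\ket{1}\bigr) = \ket{\psi}\ket{1}$ and $cU\bigl(\ket{0^n}\ket{0}\bigr) = \ket{0^n}\ket{0}$, this $cU$ sends the fixed state $\ket{\mu} := \ket{0^n}\otimes\tfrac{1}{\sqrt{2}}(\ket{1} - \ket{0})$ to $\ket{\phi}$.

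First I would record the noncontrolled simulation. Because $cU$ is a unitary on the whole $(n+1)$-qubit space and $cU\ket{\mu} = \ket{\phi}$, conjugation gives the exact identity $\Oracle_\psi = cU\,\bigl(\id - 2\ket{\mu}\bra{\mu}\bigr)\,cU^\dagger$. The middle factor is the reflection about $\ket{\mu}$, which is the reflection about the computational basis state $\ket{0^n}\ket{1}$ conjugated by a fixed single-qubit unitary on the flag; it uses no queries to $U$. So one query to $\Oracle_\psi$ costs one $cU$ and one $cU^\dagger$, i.e.\ two queries to $U$.

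Next I would handle controlled queries to $\Oracle_\psi$, which the model also permits. The hard part here is that one cannot simply control the three-factor decomposition above: controlling $cU$ would require a doubly-controlled $U$, which is not a legal query. Instead, I would use the fact that a controlled reflection is again a reflection: with control qubit $c$, controlled-$\Oracle_\psi$ equals $\id - 2\ket{1}\bra{1}_c\otimes\ket{\phi}\bra{\phi}$, i.e.\ the reflection about $\ket{1}_c\otimes\ket{\phi}$. Since $cU$ touches only the main register, the operator ``$cU$ on the main register, with $c$ idle'' is still a single (singly-)controlled $U$, and it prepares $\ket{1}_c\otimes\ket{\phi}$ from $\ket{1}_c\otimes\ket{\mu}$, so controlled-$\Oracle_\psi = (\id_c\otimes cU)\,\bigl(\id - 2\ket{1}_c\ket{\mu}\bra{1}_c\bra{\mu}\bigr)\,(\id_c\otimes cU^\dagger)$, again two queries to $U$ (the middle reflection is about a state differing from a computational basis state only by a fixed single-qubit rotation). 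Putting the two cases together, $T$ queries to $\Oracle_\psi$ (plain or controlled) are simulated exactly by $2T \le 4T + 2$ queries to $U$, which is what the lemma asks for.

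The only genuine obstacle is the controlled case just discussed: the naive ``control everything'' reduction fails, and one must re-derive the sandwich directly from the observation that a controlled reflection is itself a reflection. A secondary point worth checking, but which is automatic here, is that the simulation agrees with $\Oracle_\psi$ on the entire $(n+1)$-qubit space and not merely on the two-dimensional span of $\ket{\psi}\ket{1}$ and $\ket{0^n}\ket{0}$: conjugating the rank-one reflection $\id - 2\ket{\mu}\bra{\mu}$ by the global unitary $cU$ yields exactly the rank-one reflection about $cU\ket{\mu} = \ket{\phi}$ everywhere.
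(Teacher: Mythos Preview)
Your argument is correct and in fact sharper than the paper's: you obtain $2T$ queries rather than $4T+2$. The paper proceeds modularly, first exhibiting a two-query circuit (using one controlled-$U^\dagger$ and one $U$) that prepares $\frac{\ket{\psi}-\ket{\bot}}{\sqrt{2}}\ket{0}$ from $\ket{0^{n+1}}$, and then invoking \Cref{lem:prep_implies_refl} to turn that preparation into a reflection at a cost of $2T+1$ calls to the two-query circuit, for $4T+2$ total. You instead observe directly that a \emph{single} controlled-$U$ already maps the known state $\ket{\mu}=\ket{0^n}\otimes\tfrac{1}{\sqrt{2}}(\ket{1}-\ket{0})$ to $\ket{\phi}$ with no garbage, so the conjugation $cU\,(\id-2\ket{\mu}\bra{\mu})\,cU^\dagger$ is exactly $\Oracle_\psi$ and costs only two queries per call; your treatment of controlled-$\Oracle_\psi$ via the identity ``controlled reflection $=$ reflection about $\ket{1}_c\otimes\ket{\phi}$'' is clean and avoids the doubly-controlled-$U$ pitfall. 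The paper's route buys reusability of \Cref{lem:prep_implies_refl} (which also handles preparation with garbage) at the price of a worse constant; your route is more elementary and tight for this specific encoding of $\ket{\bot}$.
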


\begin{proof}
$\ket{\bot}$ is known, so we may assume that a known unitary $V$ acts as $V\ket{0^n} = \ket{\bot}$. Because $\Oracle_\psi$ is defined as the reflection about $\frac{\ket{\psi} - \ket{\bot}}{\sqrt{2}}$, by \Cref{lem:prep_implies_refl}, it suffices to construct a unitary that prepares any state of the form $\frac{\ket{\psi} - \ket{\bot}}{\sqrt{2}}\ket{\varphi}$ from $\ket{0^n}\ket{0^m}$ using $2$ queries to $U$. The following circuit accomplishes this, with $\ket{\varphi} = \ket{0}$:
$$\Qcircuit @C=1em @R=.7em {
\lstick{\ket{0^n}} & /\qw & \qw & \gate{V} & \gate{U^\dagger} & \qw & \ctrlo{1} & \gate{U} & \qw\\
\lstick{\ket{0}} & \gate{X} & \gate{H} & \ctrl{-1} & \ctrl{-1} & \gate{X} & \targ & \qw & \qw
}$$
\end{proof}

We introduce the notion of a \textit{random} state preparation oracle, which, to our knowledge, is new.

\begin{definition}
\label{def:U_psi}
For an $n$-qubit state $\ket{\psi}$ we define a \emph{random state preparation oracle for $\ket{\psi}$}, denoted $U_\psi$, as follows. We fix an arbitrary $n$-qubit unitary $V$ that satisfies $V\ket{0^n} = \ket{\psi}$, then choose a Haar-random unitary $W$ that acts on the $\left(2^n - 1\right)$-dimensional subspace orthogonal to $\ket{0^n}$ in the space of $n$-qubit states. Finally, we set $U_\psi = VW$.
\end{definition}

The invariance of the Haar measure guarantees that this distribution over $U_\psi$ is independent of the choice of $V$, and hence this is well-defined. Note that while we often refer to $U_\psi$ as a single unitary matrix, $U_\psi$ really refers to a \textit{distribution} over unitary matrices. Notice also that if $\ket{\psi}$ is distributed as a Haar-random $n$-qubit state, then $U_\psi$ is distributed as a Haar-random $n$-qubit unitary.

With these definitions in hand, we can now formally define the three versions of the $b$-XHOG task (\Cref{prob:xhog}) that we consider in this paper.

\begin{problem}[$b$-XHOG with canonical state preparation]
\label{prob:xhog_canonical}
Let $\ket{\psi}$ be a Haar-random $n$-qubit state. Given oracle access to $\Oracle_\psi$, output a sample $z \in \{0,1\}^n$ such that:
\[
\E_{\ket{\psi}}\left[|\braket{z|\psi}|^2\right] \ge \frac{b}{2^n}.
\]
\end{problem}

\begin{problem}[$b$-XHOG with a Haar-random oracle]
\label{prob:xhog_haar}
Let $U$ be a Haar-random $n$-qubit unitary. Given oracle access to $U$, output a sample $z \in \{0,1\}^n$ such that:
\[
\E_{U}\left[|\braket{z|U|0^n}|^2\right] \ge \frac{b}{2^n}.
\]
\end{problem}

Equivalently, in the above definition, we can let $\ket{\psi}$ be a Haar-random state, choose $U = U_\psi$, and output a sample $z$ such that:
\[
\E_{\ket{\psi}}\left[|\braket{z|\psi}|^2\right] \ge \frac{b}{2^n}.
\]

\begin{problem}[$b$-XHOG with $\FS$ circuits]
\label{prob:xhog_fourier}
Let $f: \{0,1\}^n \to \{-1,1\}$ be a uniformly random Boolean function of $n$ bits. Let $U_f$ denote the phase oracle for $f$, meaning the unitary transformation which acts as $U_f\ket{x} = f(x) \ket{x}$ for any $x \in \{0,1\}^n$. Given oracle access to $U_f$, output a sample $z \in \{0,1\}^n$ such that:
\[
\E_{f}\left[|\braket{z|H^{\otimes n} U_f H^{\otimes n}|0^n}|^2\right] \ge \frac{b}{2^n},
\]
where $H$ is the Hadamard transformation.
\end{problem}

\subsection{Other Useful Facts}

We use the following formula for the distance between unitary superoperators in the diamond norm.
\begin{fact}[\cite{AKN98}]
\label{fact:diamond_norm}
Let $V$ and $W$ be unitary matrices, and suppose $d$ is the distance between $0$ and the polygon in the complex plane whose vertices are the eigenvalues of $VW^\dagger$. Then
$$\diamondnorm{V \cdot V^\dagger - W \cdot W} = 2\sqrt{1 - d^2}.$$
\end{fact}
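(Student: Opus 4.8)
The plan is to reduce to the case $W=\id$, express the diamond norm of the difference channel as a minimization of a single inner product over pure states on a doubled Hilbert space, and then recognize that this minimum is exactly the distance $d$ from $0$ to the polygon of eigenvalues of $VW^\dagger$.

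First I would use the unitary invariance of the diamond norm: conjugating the superoperator $V\cdot V^\dagger - W\cdot W^\dagger$ by the channel $W^\dagger\cdot W$ does not change its diamond norm and turns it into $U\cdot U^\dagger - \id$, where $U := W^\dagger V$. Since $U = W^\dagger(V W^\dagger) W$ is unitarily similar to $VW^\dagger$, the two matrices have the same spectrum, hence the same polygon and the same $d$. So it suffices to prove $\diamondnorm{U\cdot U^\dagger - \id} = 2\sqrt{1-d^2}$, where $d$ is the distance from $0$ to the convex hull of the eigenvalues of $U$. Next I would invoke the standard characterization of the diamond norm of a Hermiticity-preserving map $\Phi$: by convexity of the trace norm and a Schmidt-decomposition argument, $\diamondnorm{\Phi} = \max_{\ket{\phi}} \big\| (\Phi\otimes\id)(\ket{\phi}\bra{\phi}) \big\|_1$, where $\ket{\phi}$ ranges over unit vectors on two copies of the $n$-qubit space. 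For $\Phi = U\cdot U^\dagger - \id$ this output equals $\ket{a}\bra{a} - \ket{b}\bra{b}$ with $\ket{a} = (U\otimes\id)\ket{\phi}$ and $\ket{b}=\ket{\phi}$, both unit vectors, and the trace norm of a difference of two pure-state projectors is $2\sqrt{1-|\braket{a|b}|^2}$ (diagonalize in the span of $\ket a,\ket b$). Hence $\diamondnorm{\Phi} = 2\sqrt{1-\mu^2}$, where $\mu := \min_{\ket{\phi}} \big| \bra{\phi}(U^\dagger\otimes\id)\ket{\phi} \big|$.

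It then remains to show $\mu = d$. Writing the spectral decomposition $U = \sum_j \lambda_j \Pi_j$ (with $|\lambda_j|=1$), for any unit $\ket{\phi}$ we have $\bra{\phi}(U^\dagger\otimes\id)\ket{\phi} = \sum_j \bar\lambda_j\, p_j$ with $p_j := \bra{\phi}(\Pi_j\otimes\id)\ket{\phi} \ge 0$ and $\sum_j p_j = 1$; this is a convex combination of the $\bar\lambda_j$, so its modulus is at least the distance from $0$ to $\mathrm{conv}\{\bar\lambda_j\}$, which equals $d$ since complex conjugation reflects the polygon about the real axis without changing its distance to the origin. Conversely, any prescribed weights $(p_j)$ are realized already with a trivial ancilla by $\ket{\phi} = \sum_j \sqrt{p_j}\,\ket{v_j}$ with $\ket{v_j} \in \mathrm{range}(\Pi_j)$ a unit vector, so $\mu = d$, and the formula follows.

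The only non-mechanical ingredients are the two background facts — that the diamond norm of a Hermiticity-preserving map is attained on a pure input with an equal-dimension ancilla, and that $\| \ket{a}\bra{a} - \ket{b}\bra{b}\|_1 = 2\sqrt{1-|\braket{a|b}|^2}$ for unit vectors — both of which are routine; there is no hard estimate anywhere. Consequently I expect the main ``obstacle'' to be purely expository: stating these facts cleanly and justifying the reduction $\diamondnorm{V\cdot V^\dagger - W\cdot W^\dagger} = \diamondnorm{U\cdot U^\dagger - \id}$, after which the convex-hull identification of $\mu$ with $d$ is short.
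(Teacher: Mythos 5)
Your argument is correct: the reduction to $U\cdot U^\dagger-\id$ with $U=W^\dagger V$, the pure-state/trace-norm characterization of the diamond norm, and the identification of $\min_{\ket{\phi}}|\bra{\phi}(U^\dagger\otimes\id)\ket{\phi}|$ with the distance from $0$ to the convex hull of the spectrum are all sound. The paper itself gives no proof of this fact (it is quoted from \cite{AKN98}), and your route is essentially the standard argument given there, so there is nothing further to reconcile.
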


Finally, we observe that for a Haar-random $n$-qubit quantum state, the information-theoretically largest linear XEB achievable is $O(n)$, and this is tight.

\begin{fact}\label{fact:max_XHOG}
Let $\ket{\psi}$ be a Haar-random $n$-qubit quantum state. Then:
$$\E_{\ket{\psi}}\left[\max_{z \in \{0,1\}^n}|\braket{z|\psi}|^2\right] = \frac{\Theta(n)}{2^n}.$$
\end{fact}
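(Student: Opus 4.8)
The plan is to analyze the order statistics of the measurement probabilities $p_z := |\braket{z|\psi}|^2$ for a Haar-random $n$-qubit state, which are distributed jointly as a point in the simplex $\{(p_z) : p_z \ge 0, \sum_z p_z = 1\}$ with the flat (Dirichlet) distribution. Writing $N := 2^n$, a standard fact is that this distribution is obtained by taking $N$ i.i.d.\ exponential random variables $E_1, \ldots, E_N$ (rate $1$) and normalizing, $p_z = E_z / \sum_w E_w$. Since $\sum_w E_w = N(1 + o(1))$ with overwhelming probability by concentration, it suffices up to a $(1+o(1))$ factor to understand $\E[\max_z E_z]$, i.e.\ the maximum of $N$ i.i.d.\ rate-$1$ exponentials. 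This is a textbook computation: $\E[\max_z E_z] = H_N = \ln N + O(1) = \Theta(n)$, which gives $\E[\max_z p_z] = \frac{\Theta(n)}{N}$ as claimed.

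The steps, in order: (1) Recall/cite the exponential representation of the flat distribution on the probability simplex for Haar-random state amplitudes (this is the Porter-Thomas distribution; the squared amplitudes of a Haar-random pure state in dimension $N$ are jointly distributed as normalized i.i.d.\ exponentials). (2) For the upper bound $O(n)/N$: use a union bound, $\Pr[\max_z E_z \ge t] \le N e^{-t}$, so $\Pr[\max_z E_z \ge 2\ln N] \le 1/N$, and on the complementary (overwhelming) event combine with the lower tail bound on $\sum_w E_w \ge N/2$ (Chernoff/Bernstein for sums of exponentials) to conclude $\max_z p_z \le \frac{4\ln N}{N}$ with probability $1 - o(1)$; handle the negligible bad event crudely using the deterministic bound $\max_z p_z \le 1$. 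Taking expectations yields $\E[\max_z p_z] = O(n)/N$. (3) For the lower bound $\Omega(n)/N$: show $\Pr[\max_z E_z \le \frac{1}{2}\ln N] = (1 - e^{-\frac12 \ln N})^N = (1 - N^{-1/2})^N \le e^{-\sqrt N} = o(1)$, so with probability $1 - o(1)$ we have $\max_z E_z \ge \frac{1}{2}\ln N$; combined with the upper tail $\sum_w E_w \le 2N$ (again Chernoff), $\max_z p_z \ge \frac{\ln N}{4N}$ with probability $1 - o(1)$, hence $\E[\max_z p_z] = \Omega(n)/N$.

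I expect no genuine obstacle here; the only mild care needed is step (2)'s handling of the rare event where $\sum_w E_w$ is atypically small, since a priori $\max_z p_z$ could be as large as $1$ there — but this event has probability exponentially small in $N$, so it contributes negligibly to the expectation. An alternative, perhaps cleaner route avoids the normalization entirely: compute $\E[\max_z p_z]$ directly from the exact marginal, using that each $p_z$ is $\mathrm{Beta}(1, N-1)$ distributed so $\Pr[p_z \ge s] = (1-s)^{N-1}$, together with inclusion-exclusion or the identity $\E[\max_z p_z] = \int_0^1 \Pr[\max_z p_z \ge s]\, ds$ and $\Pr[\max_z p_z \ge s] \le \min\{1, N(1-s)^{N-1}\}$ for the upper bound, plus a second-moment or disjointness argument ($p_z \ge s$ for $s > 1/2$ can hold for at most one $z$) for the matching lower bound. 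Either way the answer is $H_N/N = \Theta(n)/2^n$.
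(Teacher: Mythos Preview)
Your proposal is correct and follows essentially the same approach as the paper's proof sketch: both use the representation of the Dirichlet-distributed measurement probabilities as normalized i.i.d.\ exponentials and then handle the normalizer via concentration. The only minor difference is that the paper cites R\'enyi's classical result that the maximum of $N$ i.i.d.\ mean-$\mu$ exponentials has expectation exactly $H_N\mu$, whereas you obtain the $\Theta(\ln N)$ scaling directly from elementary tail bounds; this is a cosmetic difference in execution, not a different route.
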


\begin{proof}[Proof sketch]
For a Haar-random $\ket{\psi}$, the probabilities $|\braket{z|\psi}|^2$ follow a \textit{Porter-Thomas distribution} \cite{AAB+19}, which is to say that they approach i.i.d. exponential random variables with mean $\frac{1}{2^n}$ in the limit. By a well-known result of R\'{e}nyi \cite{Ren53}, the maximum of $N$ i.i.d. exponential random variables with mean $\mu$ is distributed as $\sum_{i=1}^N \frac{E_i}{i}$, where $E_1,\ldots,E_N$ are i.i.d. exponentially distributed with mean $\mu$. In particular, the expected value of the maximum of $N$ i.i.d. exponential random variables with mean $\mu$ is $H_N \cdot \mu$, where $H_N$ is the $N$th harmonic number. So, $\E\left[\max_{z \in \{0,1\}^n}|\braket{z|\psi}|^2\right]$ should approach $\frac{\Theta(n)}{2^n}$, because $H_N \approx \ln N$.

In reality, the probabilities $|\braket{z|\psi}|^2$ are not exactly i.i.d. exponentially distributed, but are distributed according to a Dirichlet distribution (in fact, uniform on the $2^n$-dimensional probability simplex). This distribution can be sampled from as follows: sample $E_1,E_2,\ldots,E_{2^n}$ to be i.i.d. exponential random variables, and set $|\braket{z|\psi}|^2 = \frac{E_z}{\sum_{i=1}^{2^n}E_i}$. The same proof idea still works, essentially because the denominator $\sum_{i=1}^{2^n}E_i$ concentrates well (indeed, the denominator is exponentially close to $1$ with high probability).
\end{proof}

\section{Canonical State Preparation Oracles}
\label{sec:canonical}
In this section, we prove the quantum supremacy Tsirelson inequality for XHOG with a canonical state preparation oracle for a Haar-random state (\Cref{prob:xhog_canonical}). We first sketch the important ideas in the proof. At the heart of our proof is the following lemma, due to Ambainis, Rosmanis, and Unruh \cite{ARU14}. It shows that any quantum algorithm that makes queries to a canonical state preparation oracle $\Oracle_\psi$ can be approximately simulated by a quantum algorithm that makes no queries to $\Oracle_\psi$, and instead receives various copies of $\ket{\psi}$ and superpositions of $\ket{\psi}$ with some canonical orthogonal state.

\begin{lemma}[\cite{ARU14}]
\label{lem:ARU14}
Let $A$ be a quantum query algorithm that makes $T$ queries to $\Oracle_\psi$. Then for any $k$, there is a quantum algorithm $B$ that makes no queries to $\Oracle_\psi$, and a quantum state $\ket{R}$ of the form:
$$\ket{R} := \mathop{\bigotimes}_{j=1}^k \alpha_j \ket{\psi} + \beta_j\ket{\bot}$$
such that for any state $\ket{\varphi}$:
$$\mathrm{TD}(A(\ket{\varphi}\bra{\varphi}), B(\ket{R}\bra{R},\ket{\varphi}\bra{\varphi})) \le O\left(\frac{T}{\sqrt{k}}\right).$$
\end{lemma}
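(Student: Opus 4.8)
The plan is to first pin down exactly what $\Oracle_\psi$ does, and then replace each oracle call by a query-free gadget that consumes a few components of the resource state $\ket R$. A direct computation from the definition shows that $\Oracle_\psi$, the reflection about $\ket v := \tfrac{1}{\sqrt2}(\ket\psi - \ket\bot)$, simply swaps $\ket\bot \leftrightarrow \ket\psi$ and fixes every vector orthogonal to $\mathrm{span}\{\ket\psi,\ket\bot\}$; in particular $\Oracle_\psi = \id - 2\ket v\bra v = e^{-i\pi\ket v\bra v}$. I would therefore take $\ket R := \ket v^{\otimes k}$ — which has the required product form, with $\alpha_j = -\beta_j = \tfrac{1}{\sqrt2}$ for every $j$ — let $B$ run $A$'s own (query-free) gates, and implement each of the $T$ reflections by a gadget fed with a block of fuel copies; the task is then to show that the total simulation error is $O(T/\sqrt k)$.

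For one query I would use the density-matrix-exponentiation gadget (sample-based Hamiltonian simulation in the style of Lloyd, Mohseni, and Rebentrost): with $m$ fuel copies in registers $F_1,\dots,F_m$, each in state $\ket v$, apply for $j = 1,\dots,m$ in turn the fixed unitary $e^{-i(\pi/m)\,\mathrm{SWAP}}$ between $F_j$ and the algorithm's register, and then discard $F_1,\dots,F_m$. Since $\mathrm{Tr}_{F}\!\big[e^{-i\theta\,\mathrm{SWAP}}(\sigma\otimes\ket v\bra v)\,e^{i\theta\,\mathrm{SWAP}}\big] = e^{-i\theta\ket v\bra v}\,\sigma\, e^{i\theta\ket v\bra v} + O(\theta^2)$ uniformly in $\sigma$, iterating $m$ times with $\theta = \pi/m$ realizes the channel $\sigma \mapsto e^{-i\pi\ket v\bra v}\sigma\, e^{i\pi\ket v\bra v} = \Oracle_\psi\,\sigma\,\Oracle_\psi$ up to diamond-norm error $O(1/m)$. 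This gadget is $\psi$-oblivious — all of the $\psi$-dependence sits inside $\ket R$, since $\mathrm{SWAP}$ is just a fixed swap of two registers — so $B$ makes no oracle calls, as required. Giving each of the $T$ queries its own fresh block of $m := k/T$ fuel copies and running a hybrid argument (subadditivity of diamond-norm error under composition) would give $\mathrm{TD}\big(A(\ket\varphi\bra\varphi),\,B(\ket R\bra R,\ket\varphi\bra\varphi)\big) = O(T/m) = O(T^2/k)$ for every input $\ket\varphi$. This already matches the claimed $O(T/\sqrt k)$ when $k \asymp T^2$ and is at least as strong whenever $k \gtrsim T^2$, which is the only regime in which either bound is non-trivial and also the regime used downstream in the XHOG lower bounds.

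The hard part, if one insists on the bound $O(T/\sqrt k)$ for \emph{all} $k$ rather than settling for $O(T^2/k)$, is squeezing out the better exponent. Two routes look plausible: replace the density-matrix-exponentiation gadget by a procedure that simulates a query perfectly conditioned on a ``success'' event of probability $1 - O(1/k)$ and appeal to the gentle measurement lemma, whose $\sqrt{\cdot}$ loss converts the $O(1/k)$ per-query failure probability into an $O(1/\sqrt k)$ per-query disturbance; or use fuel states $\alpha_j\ket\psi + \beta_j\ket\bot$ with angles $\arctan(\beta_j/\alpha_j)$ spread over a range so that the reflection can be assembled by interpolation, which I suspect is closer to the original argument and is the step I expect to be fiddliest. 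Everything else should be routine: that the ancillas $B$ introduces start in $\psi$-independent states, that the $O(\theta^2)$ density-matrix-exponentiation error is indeed uniform in the input (it is, by the standard analysis), and that the hybrid therefore holds against an arbitrary $\ket\varphi$.
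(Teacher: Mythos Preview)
The paper does not prove this lemma; it is quoted from \cite{ARU14} and used as a black box, so there is no in-paper argument to compare against.

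Your density-matrix-exponentiation approach is correct, and the hedging in your final paragraph is unnecessary: you have already proved the stated bound. You show $\mathrm{TD}\le C\,T^2/k$; combined with the trivial $\mathrm{TD}\le 1$ this gives $\mathrm{TD}\le\min\bigl(1,\,C\,T^2/k\bigr)\le\sqrt{C\,T^2/k}=\sqrt{C}\cdot T/\sqrt{k}$, since $\min(1,y)\le\sqrt{y}$ for all $y\ge 0$. So your $O(T^2/k)$ \emph{implies} $O(T/\sqrt{k})$, and is strictly sharper wherever either bound is nontrivial. Your route also appears genuinely different from what the lemma's phrasing suggests the original does: you take every fuel state equal to $\tfrac{1}{\sqrt2}(\ket\psi-\ket\bot)$ and Trotterize the reflection, whereas the general product $\bigotimes_j(\alpha_j\ket\psi+\beta_j\ket\bot)$ with varying coefficients in the statement hints that \cite{ARU14} spreads the angles and assembles the reflection by a different mechanism (as you yourself guess). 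Your version is more elementary, uses a simpler resource state, and yields a tighter estimate. The one loose end you flag --- controlled queries --- is fine: controlling $e^{-i\theta\,\mathrm{SWAP}}$ on the algorithm's control qubit produces controlled-$e^{-i\theta\ket v\bra v}$ up to the same $O(\theta^2)$ error (the off-diagonal blocks in the control basis expand identically), and $\Oracle_\psi^\dagger=\Oracle_\psi$ since it is a reflection.
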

So long as $k \gg T^2$, the output of $B$ will be arbitrarily close to the output of $A$ in trace distance. We will use this and \Cref{fact:max_XHOG} to show that if $A$ solves $b$-XHOG for some $b > 2$, then so does $B$. Then, to prove a lower bound on the number of queries $T$ to $\Oracle_\psi$ needed to solve $b$-XHOG, it suffices to instead lower bound $k$, the number of states of the form $\alpha_j \ket{\psi} + \beta_j\ket{\bot}$ needed to solve $b$-XHOG.

When $\ket{\psi}$ is a Haar-random state, notice that the linear XEB depends only on the \textit{magnitude} of the amplitudes in $\ket{\psi}$; the phases are irrelevant. So, when considering algorithms that attempt to solve $b$-XHOG given only a state $\ket{R}$ of the form used in \Cref{lem:ARU14}, we might as well assume that the algorithm randomly reassigns the phases on $\ket{\psi}$. More formally, define the mixed state $\sigma_R$ as
\begin{equation}
\label{eq:sigma_R}
\sigma_R := \E_{\mathrm{diagonal}\ U}\left[U^{\otimes k}\ket{R}\bra{R} U^{\dagger\otimes k}\right],
\end{equation}
where the expectation is over the diagonal unitaries $U$ such that the entries $\bra{i}U\ket{i}$ are i.i.d. uniformly random complex phases (and by convention, $\braket{\bot|U|\bot} = 1$). Then, the algorithm's average linear XEB score given $\sigma_R$ for a random choice of $\ket{\psi}$ is identical to its average linear XEB score given $\ket{R}$ for a random choice of $\ket{\psi}$, because of the invariance of the Haar measure with respect to phases.

Next, we observe that one can prepare $\sigma_R$ by measuring $k$ copies of $\ket{\psi}$ in the computational basis. We prove this in \Cref{lem:diagonal_symmetrization}. So, when considering algorithms for XHOG that start with $\ket{R}$, it suffices to instead consider algorithms that simply measure $k$ copies of $\ket{\psi}$ in the computational basis. Such algorithms are much easier to analyze, because once we have measured the $k$ copies of $\ket{\psi}$, we can assume (by convexity) that any optimal such algorithm for XHOG outputs a string $z$ deterministically given the $k$ measurement results. And in that case, clearly the optimal strategy is to output whichever $z$ maximizes the posterior expectation of $|\braket{z|\psi}|^2$ given the measurement results. We analyze this strategy in \Cref{lem:XHOG_with_states}, and show that roughly $2^{n/2}$ copies of $\ket{\psi}$ are needed to solve $b$-XHOG for $b$ bounded away from 2. The intuition is that the posterior expectation of $|\braket{z|\psi}|^2$ increases only when we see $z$ at least twice in the measurement results. However, the probability that any two measurement results are the same is tiny---on the order of $2^{-n}$---and so we need to measure at least $2^{n/2}$ copies of $\ket{\psi}$ to see any collisions with decent probability.

We now proceed to proving the necessary lemmas.

\begin{lemma}
\label{lem:diagonal_symmetrization}
Let $\ket{\psi} = \sum_{i=1}^N \psi_i \ket{i}$ be an unknown quantum state, and consider a state $\ket{R}$ of the form:
$$\ket{R} := \mathop{\bigotimes}_{j=1}^k \alpha_j \ket{\psi} + \beta_j\ket{\bot},$$
where $\alpha_j, \beta_j$ are known for $j \in [k]$, and the vectors $\{\ket{1},\ket{2},\ldots,\ket{N},\ket{\bot}\}$ form an orthonormal basis.
Define the mixed state $\sigma_R$ as above \eqref{eq:sigma_R}. Then there exists a protocol to prepare $\sigma_R$ by measuring $k$ copies of $\ket{\psi}$ in the computational basis.
\end{lemma}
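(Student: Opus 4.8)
The plan is to describe the required protocol explicitly, after first unpacking the structure of $\sigma_R$. The tempting guess --- measure each of the $k$ copies of $\ket{\psi}$ independently and rebuild $\ket{R}$ with $\ket{m_j}$ in place of $\ket{\psi}$ in register $j$ --- cannot work: the dephasing in \eqref{eq:sigma_R} uses a single shared random diagonal $U$ across all $k$ registers, which leaves intact certain coherences \emph{between} different registers, so the output of the protocol must be a genuinely entangled state assembled from correlated classical data.

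First I would expand $\ket{R}$ in the computational basis and group its terms by \emph{type}. For a tuple $\vec x \in (\{1,\ldots,N\}\cup\{\bot\})^k$, call the vector $\tau = (\tau(v))_{v \in [N]}$, where $\tau(v)$ is the number of registers of $\vec x$ holding the value $v$, the type of $\vec x$ (the remaining $k - |\tau|$ registers hold $\bot$, with $|\tau| := \sum_v \tau(v)$). Collecting terms of a common type yields a decomposition $\ket{R} = \sum_\tau \bigl(\prod_v \psi_v^{\tau(v)}\bigr)\ket{\Phi_\tau}$ in which each $\ket{\Phi_\tau}$ depends only on the known coefficients $\{\alpha_j,\beta_j\}$ and the combinatorial type $\tau$, and not at all on $\ket{\psi}$: explicitly $\ket{\Phi_\tau} = \sum_{S \subseteq [k],\, |S| = |\tau|} \bigl(\prod_{j \in S}\alpha_j \prod_{j \notin S}\beta_j\bigr)\ket{A_{\tau,S}}$, where $\ket{A_{\tau,S}}$ is the unnormalized uniform superposition over all tuples that put $\bot$ on the registers outside $S$ and distribute a multiset with multiplicities $\tau$ over the registers in $S$.

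Next, since a diagonal $U$ as in \eqref{eq:sigma_R} multiplies every type-$\tau$ basis state by the common scalar $\prod_v e^{i\theta_v \tau(v)}$, it multiplies $\ket{\Phi_\tau}$ by that same scalar; averaging over i.i.d.\ uniform phases $\theta_v$ therefore kills every cross-type term and gives $\sigma_R = \sum_\tau \bigl(\prod_v p_v^{\tau(v)}\bigr)\ket{\Phi_\tau}\!\bra{\Phi_\tau}$, where $p_v := |\psi_v|^2$. A short computation --- using that $\ket{A_{\tau,S}}$ and $\ket{A_{\tau,S'}}$ are orthogonal for $S \ne S'$, as they have different $\bot$-patterns --- shows $\braket{\Phi_\tau|\Phi_\tau} = M_\tau E_{|\tau|}$, where $M_\tau := |\tau|!/\prod_v \tau(v)!$ and $E_s$ is the coefficient of $x^s$ in $\prod_{j=1}^k(|\beta_j|^2 + |\alpha_j|^2 x)$. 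Because each $|\alpha_j|^2 + |\beta_j|^2 = 1$, we have $E_s \ge 0$ and $\sum_{s=0}^k E_s = 1$; in fact $(E_s)_s$ is precisely the distribution of the number of successes in $k$ independent trials with success probabilities $|\alpha_1|^2,\ldots,|\alpha_k|^2$.

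The protocol is now immediate. (i) Sample $s$ from $(E_s)$ by flipping, for each $j \in [k]$, a coin that comes up heads with probability $|\alpha_j|^2$, and letting $s$ be the number of heads; this uses only the known $\alpha_j$. (ii) Measure $s$ of the $k$ copies of $\ket{\psi}$ in the computational basis (ignoring the remaining $k-s$), and let $\tau$ be the multiset of outcomes, so that $\Pr[\tau \mid s = |\tau|] = M_\tau \prod_v p_v^{\tau(v)}$ and hence $\Pr[\tau] = E_{|\tau|} M_\tau \prod_v p_v^{\tau(v)}$. (iii) On fresh ancilla registers, prepare and output the normalized state $\ket{\Phi_\tau}/\sqrt{M_\tau E_{|\tau|}}$; this is well defined (the denominator is positive for any $\tau$ that step (ii) can produce) and possible without any knowledge of $\ket{\psi}$ and without efficiency assumptions, since it depends only on $\tau$ and known data. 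The output is the mixture $\sum_\tau \Pr[\tau]\, \ket{\Phi_\tau}\!\bra{\Phi_\tau}/(M_\tau E_{|\tau|}) = \sum_\tau \bigl(\prod_v p_v^{\tau(v)}\bigr)\ket{\Phi_\tau}\!\bra{\Phi_\tau} = \sigma_R$, as required. The one genuinely nontrivial step is the first: recognizing that $\sigma_R$ decouples as a $\psi$-independent family $\{\ket{\Phi_\tau}\}$ of pure states weighted by $\prod_v p_v^{\tau(v)}$, and that --- once one accounts for the norms $\braket{\Phi_\tau|\Phi_\tau} = M_\tau E_{|\tau|}$ --- these weights are exactly what one obtains by choosing how many registers to ``activate'' and then sampling that many times from the measurement distribution of $\ket{\psi}$. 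Everything after that is bookkeeping.
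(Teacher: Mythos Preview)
Your proof is correct and takes essentially the same approach as the paper: both recognize that the phase-averaging in $\sigma_R$ kills all cross-terms between tuples of different multiset type, leaving $\sigma_R$ as a mixture of known pure states (your $\ket{\Phi_\tau}$ coincides, up to normalization, with the paper's $\ket{\zeta_{\mathbf{Z}}}$) weighted by probabilities that one can sample by measuring copies of $\ket{\psi}$ and randomizing which registers hold $\bot$. The only difference is organizational: the paper measures all $k$ copies first and then sets register $j$ to $\bot$ with probability $|\beta_j|^2$ before verifying $\rho_R=\sigma_R$ entry-by-entry, whereas you first derive the type-decomposition $\sigma_R=\sum_\tau(\prod_v p_v^{\tau(v)})\ket{\Phi_\tau}\!\bra{\Phi_\tau}$ and then read off the protocol directly.
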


To give some intuition, we note that it is simpler to prove \Cref{lem:diagonal_symmetrization} in the case where $\alpha_j = 1$ for all $j$. In that case, $\sigma_R$ can be viewed as an $N^k \times N^k$ density matrix where both the rows and columns are indexed by strings in $[N]^k$. Then, the averaging over diagonal unitaries implies that $\sigma_R$ is obtained from $(\ket{R}\bra{R})^{\otimes k}$ by zeroing out all entries where the index corresponding to the row is not a reordering of the index corresponding to the column. In fact, one can show that $\sigma_R$ is expressible as a mixture of pure states, where each pure state is a uniform superposition over basis states that are reorderings of each other. Moreover, the probability associated with each pure state in this mixture is precisely the probability that one of the reorderings is observed when we measure $k$ copies of $\ket{\psi}$ in the computational basis. So, to prepare $\sigma_R$, it suffices to measure $\ket{\psi}^{\otimes k}$ and then output the uniform superposition over reorderings of the measurement result.

The proof of \Cref{lem:diagonal_symmetrization} is similar, but we instead have to randomly set some of the measurement results to $\bot$ with probability $|\beta_j|^2$.

\begin{proof}[Proof of \Cref{lem:diagonal_symmetrization}]
We first describe the protocol. Define $[N_\bot] := [N] \cup \{\bot\}$. Measure $\ket{\psi}^{\otimes k}$ in the computational basis to obtain a string $\mathbf{x} \in [N]^k$. Then, sample a string $\mathbf{\overline{x}} \in [N_\bot]^k$ by setting
$$\mathbf{\overline{x}}_j = \begin{cases}
\mathbf{x}_j & \text{with probability } |\alpha_j|^2\\
\bot & \text{with probability } |\beta_j|^2
\end{cases}$$
independently for each $j \in [k]$. Let $\mathbf{Z} := \{z \in [N_\bot]^k : z \text{ is a reordering of } \mathbf{\overline{x}}\}$. For each $z \in \mathbf{Z}$ and $j \in [k]$, define
$$\gamma_{zj} := \begin{cases}
\alpha_j & z_j \neq \bot\\
\beta_j & z_j = \bot.
\end{cases}$$
Finally, prepare and output the state
\begin{equation}
\ket{\zeta_{\mathbf{Z}}} := \frac{\sum_{z \in \mathbf{Z}} \left(\prod_{j=1}^k \gamma_{zj} \right)\ket{z}}{\sqrt{\sum_{z \in \mathbf{Z}} \prod_{j=1}^k \gamma_{zj}\gamma_{zj}^*}}.\label{eq:zeta_x}
\end{equation}
This allows us to express the density matrix $\rho_R$ output by this protocol as follows:
\begin{equation}
\rho_R := \sum_{Z \subset [N_\bot]^k} \Pr[\mathbf{Z} = Z] \cdot \ket{\zeta_Z}\bra{\zeta_Z}\label{eq:rho_R}.
\end{equation}
To complete the proof, we want to show that $\rho_R = \sigma_R$. To see that this holds, first consider an entry $\bra{x}\sigma_R\ket{y}$ of $\sigma_R$, where $x, y \in [N_\bot]^k$. It is equal to
\begin{align}
\label{eq:x_sigmaR_y_0}\bra{x}\sigma_R\ket{y} &= \E_{\mathrm{diagonal}\ U} \left[\left(\prod_{j=1}^k \gamma_{xj}\gamma_{yj}^*\right) \cdot \left(\prod_{j: x_j \neq \bot} U_{x_jx_j}\psi_{x_j} \right) \cdot \left(\prod_{j: y_j \neq \bot} U_{y_jy_j}^*\psi_{y_j}^*\right) \right]\\
\label{eq:x_sigmaR_y_1}&= \braket{x|R}\braket{R|y} \cdot \E_{\mathrm{diagonal}\ U} \left[\left(\prod_{j: x_j \neq \bot} U_{x_jx_j} \right) \cdot \left(\prod_{j: y_j \neq \bot} U_{y_jy_j}^*\right) \right]\\
\label{eq:x_sigmaR_y_2}&= \begin{cases}
\braket{x|R}\braket{R|y} & x \text{ is a reordering of } y\\
0 & \text{otherwise.}
\end{cases}
\end{align}
Here, \eqref{eq:x_sigmaR_y_0} and \eqref{eq:x_sigmaR_y_1} are simple calculations that follow from the definitions of $\ket{R}$ and $\sigma_R$. In \eqref{eq:x_sigmaR_y_2}, we use the fact that the entries $U_{ii}$ are independent, uniformly random complex units, and so $\E[U_{ii}^a U_{ii}^{*b}] = \E[U_{ii}^{a-b}]$ is $1$ if $a=b$ and $0$ otherwise, for positive integers $a, b$. Also, if $i \neq j$, then $\E[U_{ii}^a U_{jj}^{*b}] = 0$ unless $a = b = 0$.

Evidently, $\bra{x}\rho_R\ket{y} = \bra{x}\sigma_R\ket{y} = 0$ whenever $x$ is not a reordering of $y$, because $\rho_R$ is a mixture of pure states, each of which is a superposition of basis states that are reorderings of one another. So, it remains to show that $\bra{x}\rho_R\ket{y} = \bra{x}\sigma_R\ket{y} = \braket{x|R}\braket{R|y}$ whenever $x$ is a reordering of $y$. Let $Z := \{z \in [N_\bot]^k : z \text{ is a reordering of } x\}$. Then:
\begin{align}
\label{eq:x_rhoR_y_0}\bra{x}\rho_R\ket{y} &= \Pr[\mathbf{Z} = Z] \cdot \braket{x|\zeta_Z}\braket{\zeta_Z|y}\\
\label{eq:x_rhoR_y_1}&= \left(\sum_{z \in Z} \Pr[\mathbf{\overline{x}} = z] \right) \cdot \braket{x|\zeta_Z}\braket{\zeta_Z|y}\\
\label{eq:x_rhoR_y_2}&= \left(\sum_{z \in Z}\left(\prod_{j=1}^m \gamma_{zj}\gamma_{zj}^*\right)\left(\prod_{j: z_j \neq \bot}\psi_{z_j}\psi_{z_j}^*\right)\right) \cdot \frac{ \prod_{j=1}^m \gamma_{xj}\gamma_{yj}^* }{\sum_{z \in Z} \prod_{j=1}^m \gamma_{zj}\gamma_{zj}^*}\\
\label{eq:x_rhoR_y_3}&= \left(\prod_{j=1}^m \gamma_{xj}\gamma_{yj}^*\right) \cdot \left(\prod_{j: x_j \neq \bot}\psi_{x_j}\psi_{x_j}^* \right)\\
\label{eq:x_rhoR_y_4}&= \braket{x|R}\braket{R|y}\\
\label{eq:x_rhoR_y_5}&= \bra{x}\sigma_R\ket{y}.
\end{align}
Here, \eqref{eq:x_rhoR_y_0} holds because $\ket{\zeta_Z}$ and $\ket{\zeta_{Z'}}$ have disjoint support when $Z \cap Z' = \emptyset$; \eqref{eq:x_rhoR_y_1} holds by definition of $\mathbf{Z}$; \eqref{eq:x_rhoR_y_2} holds by definitions of $\mathbf{\overline{x}}$ and $\ket{\zeta_Z}$; \eqref{eq:x_rhoR_y_3} is a simplification; \eqref{eq:x_rhoR_y_4} holds by definition of $\ket{R}$, and \eqref{eq:x_rhoR_y_5} follows from \eqref{eq:x_sigmaR_y_2}, because $x$ was assumed to be a reordering of $y$. 
\end{proof}

Combining \Cref{lem:ARU14} and \Cref{lem:diagonal_symmetrization}, we have reduced the problem of lower bounding the number of $\Oracle_\psi$ queries needed to solve $b$-XHOG, to lower bounding the number of copies of $\ket{\psi}$ needed to solve $b$-XHOG. The next lemma lower bounds this latter quantity.

\begin{lemma}
\label{lem:XHOG_with_states}
Let $\ket{\psi}$ be a Haar-random $n$-qubit quantum state. Consider a quantum algorithm that receives as input $\ket{\psi}^{\otimes k}$ and outputs a string $z \in \{0,1\}^n$. Then:
$$\E_{\ket{\psi},z}\left[|\braket{z|\psi}|^2\right] \le \frac{2}{2^n} + \frac{O(k^2)}{4^n}.$$
\end{lemma}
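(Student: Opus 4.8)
The plan is to reduce to an information-theoretic argument about what measurements can reveal about a Haar-random $\ket{\psi}$, and then show that the only way the posterior expectation of $|\braket{z|\psi}|^2$ can exceed the prior value $\frac{1}{2^n}$ is by observing ``collisions'' in the data, which are rare. First I would invoke the standard fact that for extracting information about a Haar-random state, the algorithm loses nothing by measuring all $k$ copies in a fixed basis — but actually we want the opposite direction, so more carefully: we want an \emph{upper} bound on the achievable XEB, so we must allow the algorithm to do anything. Here the key structural simplification is that $|\braket{z|\psi}|^2$ depends only on the \emph{magnitudes} of the amplitudes of $\ket{\psi}$, and by the phase-invariance of the Haar measure (exactly as in the discussion preceding \Cref{eq:sigma_R}), we may assume the algorithm's input is $\sigma := \E_{\text{diagonal }U}[U^{\otimes k}(\ket{\psi}\bra{\psi})^{\otimes k}U^{\dagger\otimes k}]$ rather than $\ket{\psi}^{\otimes k}$, without changing the average XEB. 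By \Cref{lem:diagonal_symmetrization} (with all $\alpha_j = 1$, $\beta_j = 0$), this state $\sigma$ can be prepared by measuring $\ket{\psi}^{\otimes k}$ in the computational basis and post-processing the classical outcome. Hence, by convexity (the optimal post-processing is deterministic), it suffices to bound the XEB achieved by an algorithm that sees a measurement string $\mathbf{x} = (x_1,\ldots,x_k) \in (\{0,1\}^n)^k$ sampled from $|\psi_{x_1}|^2\cdots|\psi_{x_k}|^2$ and then outputs some $z = z(\mathbf{x})$.

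Next I would compute the posterior. Writing $p_i := |\braket{i|\psi}|^2$, the distribution on $(p_1,\ldots,p_N)$ with $N = 2^n$ is uniform on the simplex (Dirichlet with all parameters $1$), as recalled in the proof sketch of \Cref{fact:max_XHOG}. The quantity we want is
$$\E_{\ket{\psi},\mathbf{x}}\bigl[p_{z(\mathbf{x})}\bigr] = \sum_{\mathbf{x}} \Pr[\mathbf{x}] \cdot \E[p_{z(\mathbf{x})} \mid \mathbf{x}],$$
and since the best strategy is $z(\mathbf{x}) = \arg\max_i \E[p_i \mid \mathbf{x}]$, I need to understand $\E[p_i \mid \mathbf{x}]$ for Dirichlet posteriors. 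Observing $\mathbf{x}$ means the likelihood is $\prod_j p_{x_j}$, so the posterior on $(p_1,\ldots,p_N)$ is Dirichlet with parameter $1 + c_i$ for coordinate $i$, where $c_i = \#\{j : x_j = i\}$ is the number of times $i$ appears in $\mathbf{x}$. The posterior mean is then $\frac{1 + c_i}{N + k}$. So the algorithm should output the most frequent symbol in $\mathbf{x}$, achieving posterior mean $\frac{1 + c_{\max}}{N + k}$ where $c_{\max} = \max_i c_i$. Therefore
$$\E_{\ket{\psi},\mathbf{x}}\bigl[p_{z(\mathbf{x})}\bigr] \le \E_{\mathbf{x}}\left[\frac{1 + c_{\max}}{N+k}\right] = \frac{1 + \E[c_{\max}]}{N+k},$$
where the expectation over $\mathbf{x}$ is over the appropriate mixture (equivalently, $\mathbf{x}$ can be generated by first drawing $\ket{\psi}$ Haar-random and then sampling $k$ i.i.d. draws from its measurement distribution).

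The final step is to bound $\E[c_{\max}]$. Write $c_{\max} \le 1 + \sum_{1 \le j < j' \le k} \mathbbold{1}[x_j = x_{j'}]$ — i.e., the max multiplicity is at most one plus the number of colliding pairs — so $\E[c_{\max}] \le 1 + \binom{k}{2}\Pr[x_1 = x_{2}]$. The collision probability is $\Pr[x_1 = x_2] = \E_{\ket{\psi}}\bigl[\sum_i p_i^2\bigr]$, and for the uniform Dirichlet this is $\frac{2}{N+1} = O(1/N)$ (each $\E[p_i^2] = \frac{2}{N(N+1)}$). Thus $\E[c_{\max}] \le 1 + O(k^2/N)$, and plugging in,
$$\E_{\ket{\psi},z}\bigl[|\braket{z|\psi}|^2\bigr] \le \frac{2 + O(k^2/N)}{N + k} \le \frac{2}{N} + \frac{O(k^2)}{N^2},$$
using $N = 2^n$ and absorbing lower-order terms (the $k$ in the denominator and the $k^2/N$ in the numerator only help). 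I expect the main obstacle to be bookkeeping rather than conceptual: getting the Dirichlet posterior-mean formula cleanly (handling the continuous-vs-i.i.d.-exponential subtlety exactly as in \Cref{fact:max_XHOG}'s proof, rather than approximately), and making sure the reduction via \Cref{lem:diagonal_symmetrization} plus convexity really does cover \emph{arbitrary} quantum algorithms on $\ket{\psi}^{\otimes k}$ and not just ones that measure — the phase-invariance argument is what closes that gap, and I would want to state it carefully.
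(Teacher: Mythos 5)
Your proof is correct and follows essentially the same route as the paper's: phase-invariance plus \Cref{lem:diagonal_symmetrization} reduces the problem to an algorithm that measures $\ket{\psi}^{\otimes k}$ in the computational basis, the Dirichlet posterior mean $\frac{1+c_i}{N+k}$ identifies the optimal output as the most frequent measurement outcome, and the crude bound $c_{\max}\le 1 + \#\{\text{colliding pairs}\}$ together with the collision probability $\frac{2}{N+1}$ yields the $O(k^2/4^n)$ term. The only cosmetic difference is that you keep the $N+k$ denominator one step longer before discarding it.
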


\begin{proof}
Let $\ket{R} = \ket{\psi}^{\otimes k}$. As we have argued above, the algorithm achieves the same linear XEB score on average if it instead begins with the mixed state $\sigma_R$ defined in \eqref{eq:sigma_R}, because of the invariance of the Haar measure with respect to phases. By \Cref{lem:diagonal_symmetrization}, the algorithm can prepare $\sigma_R$ by measuring $\ket{R}$ in the computational basis. By a convexity argument, we can assume that the algorithm outputs $z$ deterministically given the measurement results.

Suppose the measurement results are $z_1,z_2,\ldots,z_k$. Clearly, the choice of $z$ that maximizes $\E\left[|\braket{z|\psi}|^2\right]$ is whichever $z$ appears most frequently in $z_1,z_2,\ldots,z_k$ (with ties broken arbitrarily): the probabilities $|\braket{i|\psi}|^2$ are distributed according to a $\mathrm{Dir}(1,1,\ldots,1)$ distribution, so we can easily compute the posterior expectations $\E\left[|\braket{i|\psi}|^2 \mid z_1,z_2,\ldots,z_k\right]$. So, it suffices to bound $\E\left[|\braket{z|\psi}|^2\right]$ for the algorithm that chooses $z$ to be the most frequent measurement result.

Let $m$ be a random variable that denotes the frequency of the chosen $z$. Then
\begin{align}
\label{eq:0}
\E\left[|\braket{z|\psi}|^2\right] &= \E\left[\E\left[|\braket{z|\psi}|^2 \mid m \right]\right]\\ \label{eq:1}
&= \E\left[\frac{1 + m}{2^n + k}\right]\\ \label{eq:2}
&\le \frac{1}{2^n} + \E\left[\frac{m}{2^n}\right]\\ \label{eq:3}
&\le \frac{1}{2^n} + \E\left[\frac{1 + \sum_{i \neq j} \id[z_i = z_j]}{2^n}\right]\\ \label{eq:4}
&= \frac{2}{2^n} + \sum_{i \neq j}\frac{\Pr[z_i = z_j]}{2^n}\\ \label{eq:5}
&= \frac{2}{2^n} + \binom{k}{2}\frac{2}{2^n(2^n + 1)}\\
&\le \frac{2}{2^n} + \frac{O(k^2)}{4^n}.
\end{align}
Here, \eqref{eq:0} is valid by the law of total expectation. \eqref{eq:1} substitutes the formula for the posterior expectation of a Dirichlet distribution. \eqref{eq:2} is valid by linearity of expectation. In \eqref{eq:3}, we use the crude upper bound that $m$ is at most one more than the number of pairwise collisions in $z_1,\ldots,z_k$ (which is tight when the number of collisions is 0 or 1). \eqref{eq:4} is valid by linearity of expectation. \eqref{eq:5} expands the sum, and computes the collision probabilities in terms of moments of the underlying $\mathrm{Dir}(1,1,\ldots,1)$ prior distribution.
\end{proof}

We note that one should not expect \Cref{lem:XHOG_with_states} to be tight for large $k$ (say, $k = \Omega\left(2^{n/2}\right)$). For example, to achieve $b = 4$, we need at least enough samples to see $m \ge 3$ with good probability. But $\Pr[m \ge 3]$ is negligible unless $k = \Omega\left(2^{2n/3}\right)$. More generally, a tight bound on the number of copies of $\ket{\psi}$ needed to achieve a particular value of $b$ seems closely related to the number of measurements of $\ket{\psi}$ needed to see $m \ge b - 1$. This is like a sort of ``balls into bins'' problem \cite{JK77,RS98} with $k$ balls and $2^n$ bins in which we want to bound the probability that the maximum load of any bin exceeds $m$, but where the probabilities associated to each bin follow a Dirichlet prior rather than being uniform.

We finally have the tools to prove the main result of this section.

\begin{theorem}
\label{thm:XHOG_O_psi}
Any quantum query algorithm for $(2 + \eps)$-XHOG with query access to $\Oracle_\psi$ for a Haar-random $n$-qubit state $\ket{\psi}$ requires $\Omega\left(\frac{2^{n/4}\eps^{5/4}}{n}\right)$ queries.
\end{theorem}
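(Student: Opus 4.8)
The plan is to chain \Cref{lem:ARU14}, \Cref{lem:diagonal_symmetrization}, and \Cref{lem:XHOG_with_states}, and then balance a free parameter. Suppose an algorithm $A$ makes $T$ queries to $\Oracle_\psi$ and solves $(2+\eps)$-XHOG, so that $\E_{\ket\psi,z}[|\braket{z|\psi}|^2] \ge \frac{2+\eps}{2^n}$. Fix a parameter $k$ to be chosen at the end, and let $B$ and $\ket R$ be as produced by \Cref{lem:ARU14}; then the output distribution of $B$ run on $\ket R$ is within trace distance $O(T/\sqrt k)$ of the output distribution of $A$, for every $\ket\psi$.

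First I would convert this trace-distance guarantee into a guarantee on the expected linear XEB score. For a fixed $\ket\psi$, writing $p_A, p_B$ for the output distributions of $A$ and $B$ over strings $z$, we have
$$\left|\sum_z \bigl(p_A(z)-p_B(z)\bigr)|\braket{z|\psi}|^2\right| \le \Bigl(\max_z|\braket{z|\psi}|^2\Bigr)\sum_z |p_A(z)-p_B(z)| = O\!\left(\frac{T}{\sqrt k}\right)\max_z|\braket{z|\psi}|^2 .$$
Taking the expectation over the Haar-random $\ket\psi$ and applying \Cref{fact:max_XHOG} to bound $\E_{\ket\psi}[\max_z|\braket{z|\psi}|^2] = \frac{O(n)}{2^n}$, it follows that $B$ run on $\ket R$ achieves expected linear XEB score at least $\frac{2 + \eps - O(nT/\sqrt k)}{2^n}$.

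Second I would upper bound the expected linear XEB score of \emph{any} algorithm that begins with $\ket R$. As in the discussion surrounding \eqref{eq:sigma_R}, because the linear XEB depends only on the amplitude magnitudes of the Haar-random $\ket\psi$, the expected score of $B$ on $\ket R$ equals its expected score on the phase-symmetrized state $\sigma_R$; and by \Cref{lem:diagonal_symmetrization}, $\sigma_R$ is preparable by measuring $\ket\psi^{\otimes k}$ in the computational basis and post-processing the outcomes. Thus $B$ composed with this preparation is an algorithm that receives $\ket\psi^{\otimes k}$ and outputs a string, so \Cref{lem:XHOG_with_states} bounds its expected score by $\frac{2}{2^n}+\frac{O(k^2)}{4^n}$. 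Combining the two bounds yields $\eps \le \frac{O(nT)}{\sqrt k} + \frac{O(k^2)}{2^n}$. Choosing $k = \Theta(\sqrt\eps\, 2^{n/2})$ (so that the second term is at most $\eps/2$) forces $\frac{O(nT)}{\sqrt k} \ge \eps/2$, and hence $T = \Omega\!\left(\frac{\eps\sqrt k}{n}\right) = \Omega\!\left(\frac{\eps^{5/4}\, 2^{n/4}}{n}\right)$, as claimed.

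The one genuinely delicate step is the second paragraph: a bound on the trace distance between the \emph{output distributions} of $A$ and $B$ does not by itself control the change in the \emph{expected} linear XEB, since a single probability $|\braket{z|\psi}|^2$ could in principle be as large as $1$. It is exactly the fact (\Cref{fact:max_XHOG}) that the largest output probability of a Haar-random state is only $\frac{O(n)}{2^n}$ on average that makes this step go through, and this is where the $\frac1n$ loss in the final bound originates. Everything else is bookkeeping: the exponents $\tfrac54$ and $2^{n/4}$ fall out of balancing the $T/\sqrt k$ simulation error of \Cref{lem:ARU14} against the $k^2/2^n$ collision term of \Cref{lem:XHOG_with_states}.
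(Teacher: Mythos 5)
Your proof is correct and follows essentially the same route as the paper's: chain \Cref{lem:ARU14}, \Cref{lem:diagonal_symmetrization}, and \Cref{lem:XHOG_with_states}, use \Cref{fact:max_XHOG} to convert the trace-distance guarantee of \Cref{lem:ARU14} into a bound on the expected linear XEB score, and balance $k$ against $T$. The only difference is bookkeeping at the end: the paper fixes $k = \Theta(T^2 n^2/\eps^2)$ up front so that the simulation error is $\eps/2$ and then applies \Cref{lem:XHOG_with_states} to constrain $T$, whereas you keep $k$ free, derive $\eps \le O(nT/\sqrt{k}) + O(k^2)/2^n$, and optimize; both give the same $\Omega(\eps^{5/4}2^{n/4}/n)$ bound.
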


\begin{proof}
Consider a quantum algorithm $A$ that makes $T$ queries to $\Oracle_\psi$ and solves $(2 + \eps)$-XHOG. Choose $k = \frac{c^2T^2n^2}{\eps^2}$ in \Cref{lem:ARU14} for a constant $c$ to be chosen later. By \Cref{lem:ARU14}, there is a quantum algorithm $B$ that makes no queries to $\Oracle_\psi$ and instead starts with a state $\ket{R}$ (depending on $\ket{\psi}$) such that the trace distance between the output of $A$ and $B$ is at most $O\left(\frac{\eps}{cn}\right)$ for every $\ket{\psi}$. In particular, if we view $\ket{\psi}$ as fixed, then the total variation distance between the outputs $z_A$ and $z_B$ of $A$ and $B$, respectively, (as probability distributions over $\{0,1\}^n$) is at most $O\left(\frac{\eps}{cn}\right)$. Hence, for every $\ket{\psi}$, we may write:
\begin{align*}
\E_{z_A}\left[ |\braket{z_A|\psi}|^2 \right] - \E_{z_B}\left[ |\braket{z_B|\psi}|^2 \right] &= \sum_{z \in \{0,1\}^n} |\braket{z|\psi}|^2 \cdot \left( \Pr[z_A = z] - \Pr[z_B = z] \right)\\
&\le \sum_{z\in \{0,1\}^n} |\braket{z|\psi}|^2 \cdot \left| \Pr[z_A = z] - \Pr[z_B = z] \right| \\
&\le \max_{z \in \{0,1\}^n} |\braket{z|\psi}|^2 \cdot \sum_{z' \in \{0,1\}^n} \left| \Pr[z_A = z'] - \Pr[z_B = z'] \right|\\
&\le \max_{z \in \{0,1\}^n}  |\braket{z|\psi}|^2 \cdot O\left(\frac{\eps}{cn}\right),
\end{align*}
because the sum in the penultimate inequality is twice the total variation distance between $z_A$ and $z_B$. \Cref{fact:max_XHOG} states that for a Haar-random $\ket{\psi}$, $\E_{\ket{\psi}}\left[\max_{z \in \{0,1\}^n}|\braket{z|\psi}|^2\right] \le \frac{O(n)}{2^n}$. So, for a Haar-random $\ket{\psi}$, we have
$$\E_{\ket{\psi},z_A}\left[ |\braket{z_A|\psi}|^2 \right] - \E_{\ket{\psi},z_B}\left[ |\braket{z_B|\psi}|^2 \right] \le O\left(\frac{\eps}{c2^n}\right).$$
In particular, if we choose $c$ sufficiently large, then $B$ solves $\left(2 + \frac{\eps}{2}\right)$-XHOG.

Because of the invariance of the Haar measure with respect to phases, $B$ still solves $\left(2 + \frac{\eps}{2}\right)$-XHOG if the pure state $\ket{R}$ is replaced with the mixed state $\sigma_R$ defined in \eqref{eq:sigma_R}. By \Cref{lem:diagonal_symmetrization}, this implies the existence of an algorithm that solves $\left(2 + \frac{\eps}{2}\right)$-XHOG given $k$ copies of $\ket{\psi}$. By \Cref{lem:XHOG_with_states}, such an algorithm must satisfy:
$$\frac{\eps}{2} \le \frac{O(k^2)}{2^n}.$$
Plugging in $k$ gives the desired lower bound on $T$:
$$\frac{\eps}{2} \le O\left(\frac{T^4n^4}{2^n\eps^4}\right)$$
\[T \ge \Omega\left(\frac{2^{n/4}\eps^{5/4}}{n}\right).\qedhere\]
\end{proof}

Lastly, we give an upper bound on the number of queries needed to nontrivially beat the naive algorithm for XHOG with $\Oracle_\psi$. In fact, the following algorithm works with \textit{any} oracle that prepares a Haar-random state (including a Haar-random unitary), because the algorithm only needs copies of $\ket{\psi}$ and the ability to perform the reflection $\Refl_\psi$. We thank Scott Aaronson for suggesting this approach based on quantum collision-finding.

\begin{theorem}
\label{thm:XHOG_collision}
There is a quantum algorithm for $(2 + \Omega(1))$-XHOG that makes $O\left(2^{n/3}\right)$ queries to a state preparation oracle for a Haar-random $n$-qubit state $\ket{\psi}$.
\end{theorem}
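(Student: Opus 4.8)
The plan is to combine the naive sampling algorithm with amplitude amplification, in a way that parallels the structure of the BHT collision-finding algorithm~\cite{BHT97}. Fix a parameter $T$ to be optimized at the end. First, prepare $T$ copies of $\ket{\psi}$ --- each costs $O(1)$ queries, since a single oracle query maps a known fixed state to $\ket{\psi}$ --- and measure each in the computational basis, recording the set $S = \{z_1, \dots, z_T\} \subseteq \{0,1\}^n$ of (at most $T$) distinct outcomes. Since $S$ is held as classical data, the reflection $\Refl_S := \id - 2 \sum_{z \in S} \ket{z}\bra{z}$ can be implemented with no further queries. Next, starting from one more copy of $\ket{\psi}$, run amplitude amplification~\cite{BHMT02} using the reflections $\Refl_S$ and $\Refl_\psi$, the latter of which costs $O(1)$ queries by \Cref{lem:prep_implies_refl}; this amplifies the squared amplitude that $\ket{\psi}$ places on $\mathrm{span}\{\ket{z} : z \in S\}$, namely $P := \sum_{z \in S} |\braket{z|\psi}|^2$, up to a constant. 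Finally, measure and output the resulting string $z^*$. As verified below, $P$ concentrates around $\Theta(T/2^n)$, so the amplification uses $O\!\left(\sqrt{2^n/T}\right)$ iterations and hence $O\!\left(\sqrt{2^n/T}\right)$ queries, giving a total of $O\!\left(T + \sqrt{2^n/T}\right)$ queries, which is minimized by taking $T = \Theta(2^{n/3})$.

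For the analysis, the key structural fact is that amplitude amplification keeps the state inside the two-dimensional subspace spanned by the normalized projections of $\ket{\psi}$ onto $\mathrm{span}\{\ket{z} : z \in S\}$ and its orthogonal complement. Hence, conditioned on the final measurement landing in $S$ (the ``success'' event, of probability $q$), the outcome $z^*$ is distributed exactly as $|\braket{z|\psi}|^2/P$ over $z \in S$, and conditioned on failure it is distributed as $|\braket{z|\psi}|^2/(1-P)$ over $z \notin S$. Writing $p_z := |\braket{z|\psi}|^2$, this gives
\[
\E\!\left[ p_{z^*} \mid \psi, S, \text{success} \right] = \frac{\sum_{z \in S} p_z^2}{\sum_{z \in S} p_z}.
\]
By choosing the iteration count using the concentration of $P$ (or via a fixed-point scheme), we may ensure $q \ge 1 - o(1)$, so it suffices to lower bound this conditional expectation and then note that the failure branch contributes nonnegatively (in fact $\approx 2/2^n$, though we do not need this).

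To estimate the numerator and denominator, recall from the proof of \Cref{fact:max_XHOG} that for a Haar-random $\ket{\psi}$ the tuple $(p_z)_z$ is $\mathrm{Dir}(1,\dots,1)$-distributed, so a single coordinate satisfies $\E[p_z^k] = k!/\big(2^n(2^n+1)\cdots(2^n+k-1)\big)$. Using $\Pr_S[z \in S] = 1 - (1-p_z)^T = Tp_z \cdot (1 \pm O(Tp_z))$ --- valid since $Tp_z \le O(Tn/2^n) = o(1)$ for $T = \Theta(2^{n/3})$, as $p_z \le O(n/2^n)$ with overwhelming probability by the reasoning behind \Cref{fact:max_XHOG} --- one computes
\[
\E_{\psi,S}\!\left[ \sum_{z \in S} p_z^2 \right] = (1 \pm o(1))\frac{6T}{2^{2n}}, \qquad \E_{\psi,S}\!\left[ \sum_{z \in S} p_z \right] = (1 \pm o(1))\frac{2T}{2^n}.
\]
Moreover $P = \sum_{z\in S} p_z$ is, up to the (negligible) residual collision probability $O(T^2/2^n)$ within $S$, a sum of $T$ i.i.d.\ terms each bounded by $\max_z p_z = O(n/2^n)$, so by a Hoeffding bound it lies within a $(1\pm o(1))$ factor of its mean $\approx 2T/2^n$ except with probability $\exp\!\big(-\Omega(2^{n/3}/n^2)\big) \ll 2^{-n}$; on this tiny bad event $\sum_{z\in S}p_z^2/P \le 1$, so its contribution is negligible. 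Pulling the pinned value of $P$ out of the expectation, $\E_{\psi,S,\text{success}}[p_{z^*}] = (1 \pm o(1))\cdot \frac{2^n}{2T}\cdot\frac{6T}{2^{2n}} = (3 - o(1))/2^n$, and combining with $q \ge 1 - o(1)$ yields $\E_{\psi,z^*}[p_{z^*}] \ge (3 - o(1))/2^n = (2 + \Omega(1))/2^n$. (One must also check that an exponentially small set of atypical $\ket{\psi}$, for which $P$ deviates from its mean and the amplification may misbehave, contributes only $o(1/2^n)$.)

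The main obstacle is that the marked set $S$ --- and hence the initial amplitude $\sqrt{P}$ --- is itself random and $\psi$-dependent: one cannot run a fixed number of Grover iterations and guarantee a fixed success probability, and $\E[p_{z^*} \mid \text{success}]$ is an expectation of a ratio $\sum_{z\in S}p_z^2 / P$ rather than a ratio of expectations. Both difficulties are resolved by the same ingredient, namely the tight multiplicative concentration of $P$ around $\Theta(T/2^n)$ (which holds because $T = \Theta(2^{n/3}) \to \infty$); carrying this out cleanly, and verifying that the atypical $\ket{\psi}$ and the residual collisions within $S$ are genuinely negligible, is where the care lies.
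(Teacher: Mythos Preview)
Your algorithm is the same as the paper's (BHT-style: sample $T = \Theta(2^{n/3})$ copies, then amplitude-amplify onto the set $S$ of observed strings), and your proposal is correct, but your analysis of the XEB score takes a genuinely different route. You compute $\E[p_{z^*}\mid\text{success}] = \E\!\big[\sum_{z\in S} p_z^2 / P\big]$ directly via Dirichlet moments ($\E[p_z^k] = k!/\prod_{j<k}(2^n+j)$) together with multiplicative concentration of $P$ around $2T/2^n$, which lets you replace the ratio of random variables by the ratio of expectations. The paper instead avoids the ratio-of-expectations issue entirely with a Bayesian trick: it compares to a ``modified'' algorithm that, rather than amplifying, simply measures one more copy $z_{k+1}$ and declares success if $z_{k+1}\in\{z_1,\dots,z_k\}$. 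Conditioned on success, the output has been observed at least twice among $k+1$ samples, so the Dirichlet posterior immediately gives $\E[p_z\mid\text{success}]\ge 3/(2^n+k+1)$; the paper then observes that amplitude amplification preserves the conditional distribution of the output given success, so the same bound transfers. For the query count, the paper also uses a concentration bound on $P$, but one that holds for \emph{every} $\ket{\psi}$ (via the pigeonhole-type observation $\Pr[p_{z_i}\ge 2^{-(n+1)}]\ge 1/2$), whereas your Hoeffding argument implicitly requires the ``nice $\psi$'' events $\max_z p_z = O(n/2^n)$ and $\sum_z p_z^2 = (2\pm o(1))/2^n$, which hold only with high probability over the Haar measure. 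The paper's route is slicker and needs no concentration for the XEB bound; your route is more computational but, as you note, actually yields $(3-o(1))/2^n$ directly (the paper only remarks that a sharper analysis should give this). One small point: your Hoeffding bound gives concentration of $P$ around $\E[P\mid\psi]=T\sum_z p_z^2$, not around $2T/2^n$; you should make explicit that $\sum_z p_z^2$ also concentrates over the Haar measure (and fold its failure into your ``atypical $\ket{\psi}$'' event).
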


\begin{proof}
The quantum algorithm is essentially equivalent to the collision-finding algorithm of Brassard, H\o{}yer, and Tapp \cite{BHT97}. We proceed by measuring $k = 2^{n/3}$ copies of $\ket{\psi}$ in the computational basis, with results $z_1, z_2, \ldots, z_k \in \{0,1\}^n$. If any string appears twice in $z_1, z_2, \ldots, z_k$, we output the first such collision. Otherwise, we perform quantum amplitude amplification \cite{BHMT02} on another copy of $\ket{\psi}$, where the ``good'' subspace is spanned by $z_1, z_2, \ldots, z_k$. This uses the reflection $\Refl_\psi$, which can be simulated using a constant number of queries to any oracle that prepares $\ket{\psi}$ (see \Cref{lem:prep_implies_refl}). Finally, we measure and output the result of the amplitude amplification; call this result $z_{k+1}$. For the purpose of analyzing this algorithm, we say that the algorithm ``succeeds'' if it either finds a collision in $z_1, z_2, \ldots, z_k$, or if $z_{k+1}$ is contained in the good subspace.

We first argue that for any $\ket{\psi}$, $O\left(2^{n/3}\right)$ queries to $\Refl_\psi$ (for amplitude amplification) are sufficient for the algorithm to succeed with high probability. For this part of the analysis, we view $\ket{\psi}$ as fixed, and consider only the randomness of the algorithm. Notice that for each $1 \le i \le k$, $\Pr\left[|\braket{z_i|\psi}|^2 \ge \frac{1}{2^{n+1}}\right] \ge \frac{1}{2}$, because at most half of the probability mass of the output distribution of $\ket{\psi}$ can be placed on inputs for which the output probability is less than $\frac{1}{2^{n+1}}$, because there are only $2^n$ possible outputs. Thus, by a Chernoff bound, we have that:
$$\Pr\left[\sum_{i=1}^k |\braket{z_i|\psi}|^2 \ge \frac{k}{2^{n + 2}}\right] \ge 1 - \exp(O(k)).$$
In particular, with probability $1 - \exp(O(k))$, either the algorithm finds a collision in $z_1,z_2,\ldots,z_k$, or else $O\left(\sqrt{\frac{2^n}{k}}\right) = O\left(2^{n/3}\right)$ applications of $\Refl_\psi$ within the amplitude amplification subroutine are sufficient to measure a good string with arbitrarily high constant probability. So overall, the algorithm can be assumed to succeed with arbitrarily high constant probability.

Next, we argue that the algorithm outputs a string $z$ such that $\E\left[|\braket{z|\psi}|^2\right] \ge \frac{2 + \Omega(1)}{2^n}$. Suppose that instead of performing amplitude amplification at the end, we just measured one additional copy of $\ket{\psi}$ (still calling the result $z_{k+1}$) and output $z_{k+1}$ if there were no collisions in $z_1,z_2,\ldots,z_k$. Then notice that, conditional on this modified algorithm's success, the expected XEB score is at least $\frac{3 - o(1)}{2^n}$. In symbols, we claim that:
$$\E\left[|\braket{z|\psi}|^2 \mid \mathrm{success}\right] \ge \frac{3}{2^n + k + 1}$$
for this modified algorithm, because conditional on success, $z$ was observed at least twice in $z_1,z_2,\ldots,z_{k+1}$, so $\frac{3}{2^n + k + 1}$ is a lower bound on the posterior expectation of the underlying Dirichlet prior distribution on the output probabilities of $\ket{\psi}$. But now, we claim that $\E\left[|\braket{z|\psi}|^2 \mid \mathrm{success}\right]$ is the same for both the modified algorithm and the original algorithm that uses amplitude amplification. The reason is that amplitude amplification preserves conditional probabilities: the conditional probability distribution of $z_{k+1}$ is exactly the same in both algorithms, when conditioned on measuring in the good subspace. So overall, we have that:
\begin{align*}
\E\left[|\braket{z|\psi}|^2\right] &= \E\left[|\braket{z|\psi}|^2 \mid \mathrm{success}\right] \cdot \Pr[\mathrm{success}] + \E\left[|\braket{z|\psi}|^2 \mid \mathrm{failure}\right] \cdot \Pr[\mathrm{failure}]\\
&\ge \E\left[|\braket{z|\psi}|^2 \mid \mathrm{success}\right] \cdot \Pr[\mathrm{success}]\\
&= \frac{3 - o(1)}{2^n} \cdot (1 - p)\\
&\ge \frac{2 + \Omega(1)}{2^n},
\end{align*}
where $p$ is the arbitrarily small constant failure probability of amplitude amplification.
\end{proof}

We remark that a sharper analysis could most likely improve the above algorithm from solving $(2 + \Omega(1))$-XHOG to solving $(3 - o(1))$-XHOG, while still using the same number of queries. For most Haar-random states $\ket{\psi}$, the probability of measuring in the ``good'' subspace should concentrate very well. As a result, it should be possible to fix some $T(n)$ such that running exactly $T(n)$ iterations of Grover's algorithm ensures finding a marked with high probability, rather than constant probability.

\section{Random State Preparation Oracles}
\label{sec:random}
In this section, we show that a canonical state preparation oracle and a random state preparation oracle are essentially equivalent, and use it to prove the quantum supremacy Tsirelson inequality for XHOG with a Haar-random oracle (\Cref{prob:xhog_haar}).

By \Cref{lem:prep_implies_canonical}, for a state $\ket{\psi}$, query access to a random state preparation oracle $U_\psi$ implies query access to the canonical state preparation oracle $\Oracle_\psi$ with constant overhead. The reverse direction is less obvious. We know from the definition of $U_\psi$ (\Cref{def:U_psi}) that one can simulate $U_\psi$ given \textit{any} $n$-qubit unitary $V$ that prepares $\ket{\psi}$ from $\ket{0^n}$. So, it is tempting to let $V = \Oracle_\psi$ with $\ket{\bot} = \ket{0^n}$ to argue that $\Oracle_\psi$ allows simulating $U_\psi$. However, this is only possible if $\ket{0^n}$ is orthogonal to $\ket{\psi}$. And while we previously argued that we can always find a canonical state $\ket{\bot}$ that is orthogonal to $\ket{\psi}$ (\Cref{foot:1}), this requires extending the Hilbert space, so that $\Oracle_\psi$ no longer acts on $n$ qubits!

To address this, imagine that we knew an explicit $n$-qubit state $\ket{\psi^\bot}$ orthogonal to $\ket{\psi}$. Notice that we could perfectly swap $\ket{\psi}$ and $\ket{\psi^\bot}$: the composition $\Oracle_\psi\Oracle_{\psi^\bot}\Oracle_\psi$ sends $\ket{\psi}$ to $\ket{\psi^\bot}$, $\ket{\psi^\bot}$ to $\ket{\psi}$, and acts trivially on all states orthogonal to $\ket{\psi}$ and $\ket{\psi^\bot}$. In particular, this swaps $\ket{\psi}$ and $\ket{\psi^\bot}$ while acting only on the space of $n$-qubit states. Next, if we know $\ket{\psi^\bot}$ explicitly, we can certainly come up with an $n$-qubit unitary that sends $\ket{0^n}$ to $\ket{\psi^\perp}$. By composing such a unitary with $\Oracle_\psi\Oracle_{\psi^\bot}\Oracle_\psi$, we are left with an $n$-qubit unitary that sends $\ket{0^n}$ to $\ket{\psi}$. This is sufficient to construct $U_\psi$, by \Cref{def:U_psi}.

While we do not necessarily have such a state $\ket{\psi^\bot}$, a \textit{random} $n$-qubit state $\ket{\varphi}$ will be exponentially close to such a $\ket{\psi^\bot}$ with overwhelming probability. The next theorem shows that we can use this observation to \textit{approximately} simulate $U_\psi$ given $\Oracle_\psi$, by going through the steps above and keeping track of deviation from the ideal construction in terms of $\braket{\psi|\varphi}$.

\begin{theorem}\label{thm:standard->random}
Let $\ket{\psi}$ be an $n$-qubit state. Consider a quantum query algorithm $A$ that makes $T$ queries to $U_\psi$. Then there is a quantum query algorithm $B$ that makes $2T$ queries to $\Oracle_\psi$ such that:
$$\diamondnorm{\E_{U_\psi}\left[A\right] - B} \le \frac{10T + 4}{2^{n/2}}.$$
\end{theorem}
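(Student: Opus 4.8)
The plan is to carry out the strategy sketched just before the statement. The algorithm $B$ internally samples a Haar-random $n$-qubit state $\ket{\varphi}$ and a Haar-random unitary $W$ supported on the $(2^n-1)$-dimensional subspace orthogonal to $\ket{0^n}$; it also fixes a known $n$-qubit unitary $V_\varphi$ with $V_\varphi\ket{0^n}=\ket{\varphi}$ and the known reflection $\Oracle_{\varphi}$ about $\frac{\ket{\varphi}-\ket{\bot}}{\sqrt{2}}$ (neither of which costs any oracle calls). It then sets $\tilde V_\varphi := \Oracle_\psi \Oracle_\varphi \Oracle_\psi V_\varphi$ and simulates $A$, answering each of its $T$ queries to $U_\psi$, $U_\psi^\dagger$, or a controlled version by applying the appropriate version of $\tilde V_\varphi W$. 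Since $V_\varphi$, $\Oracle_\varphi$, and $W$ are all known to $B$, and $\Oracle_\psi=\Oracle_\psi^\dagger$, each simulated query costs exactly two (possibly controlled) calls to $\Oracle_\psi$, so $B$ makes $2T$ queries in total. Note that $\tilde V_\varphi$ acts on the enlarged space carrying $\ket{\bot}$, so $B$ keeps an ancilla initialized appropriately.

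The key structural fact is that when $\ket{\varphi}$ is \emph{exactly} orthogonal to $\ket{\psi}$, the composition $\Oracle_\psi\Oracle_\varphi\Oracle_\psi$ restricts to a genuine $n$-qubit unitary: using $\Oracle_\psi\ket{\psi}=\ket{\bot}$, $\Oracle_\varphi\ket{\bot}=\ket{\varphi}$, and the fact that $\Oracle_\psi,\Oracle_\varphi$ fix everything orthogonal to their defining vectors, one checks that $\Oracle_\psi\Oracle_\varphi\Oracle_\psi$ swaps $\ket{\psi}\leftrightarrow\ket{\varphi}$ and is the identity on the orthocomplement of $\mathrm{span}\{\ket{\psi},\ket{\varphi}\}$ inside the $n$-qubit space (and fixes $\ket{\bot}$). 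Hence in this idealized case $\tilde V_\varphi$ is an honest $n$-qubit unitary with $\tilde V_\varphi\ket{0^n}=\ket{\psi}$, and by invariance of the Haar measure $\tilde V_\varphi W$ is distributed exactly as $U_\psi$. Consequently the \emph{idealized} variant of $B$ — in which $\ket{\varphi}$ is replaced by the renormalized projection $\ket{\psi^\bot}$ of $\ket{\varphi}$ onto $\ket{\psi}^{\perp}$ and $\Oracle_\varphi$ by $\Oracle_{\psi^\bot}$ — produces $\E_{U_\psi}[A]$ exactly, for every value of $\ket{\varphi}$, hence also after averaging over $\ket{\varphi}$.

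It remains to bound the gap between $B$ and its idealized variant. Writing $\delta := |\braket{\psi|\varphi}|$ and choosing the phase of $\ket{\psi^\bot}$ to match $\ket{\varphi}$, we get $\|\ket{\varphi}-\ket{\psi^\bot}\|=O(\delta)$, hence $\|\Oracle_\varphi-\Oracle_{\psi^\bot}\|_{\mathrm{op}}=O(\delta)$; interpolating the remaining unitaries (and absorbing the fact that, in the real construction, $\tilde V_\varphi\ket{0^n}$ is only $O(\delta)$-close to $\ket{\psi}$, e.g.\ by inserting after each simulated query a fixed rotation $R_\varphi$ with $\|R_\varphi-\id\|_{\mathrm{op}}=O(\delta)$) shows via a standard hybrid argument over the $T$ queries that, for each fixed $\ket{\varphi}$, the real and idealized channels differ by $O(T\delta)$ in diamond norm (and never by more than $2$). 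Finally, convexity of the diamond norm and linearity of expectation give $\diamondnorm{\E_{U_\psi}[A]-B}\le\E_{\ket{\varphi}}\!\big[O(T\delta)\big]=O(T)\cdot\E_{\ket{\varphi}}[\delta]\le O(T)\sqrt{\E_{\ket{\varphi}}[|\braket{\psi|\varphi}|^2]}=O\!\big(T/2^{n/2}\big)$, using $\E_{\ket{\varphi}}[|\braket{\psi|\varphi}|^2]=2^{-n}$ for a Haar-random $n$-qubit $\ket{\varphi}$; tracking the constants through the hybrid yields the claimed $\frac{10T+4}{2^{n/2}}$.

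The main obstacle is precisely this error bookkeeping: one must verify that $\tilde V_\varphi$ built from the not-quite-orthogonal random $\ket{\varphi}$ is still a genuine unitary on the enlarged space (it can leak $O(\delta)$ amplitude into the $\ket{\bot}$ direction, so the ``simulated $U_\psi$'' is only approximately $n$-qubit), that each of the $O(\delta)$ discrepancies — in $\Oracle_\varphi$ versus $\Oracle_{\psi^\bot}$, in the prepare-unitary, and in $\tilde V_\varphi\ket{0^n}$ versus $\ket{\psi}$ — accumulates only linearly in $T$, and that the averaging over $\ket{\varphi}$ is not spoiled by the rare $\ket{\varphi}$ with large overlap with $\ket{\psi}$ (handled by the crude bound $\diamondnorm{\cdot}\le 2$ together with $\E_{\ket{\varphi}}[\delta]\le 2^{-n/2}$). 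The swap identity, the Haar-invariance argument, and the query count are otherwise routine.
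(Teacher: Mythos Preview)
Your proposal is correct and follows essentially the same strategy as the paper: sample a Haar-random $\ket{\varphi}$ and $W$, use the three-reflection composition $\Oracle_\psi\Oracle_{(\cdot)}\Oracle_\psi$ to turn a preparation unitary for $\ket{\varphi}\approx\ket{\psi^\bot}$ into one for $\ket{\psi}$, compose with $W$, bound the per-$\ket{\varphi}$ error by $O(T|\braket{\psi|\varphi}|)$ via a hybrid over the $T$ queries, and finish by averaging with $\E_{\ket{\varphi}}[|\braket{\psi|\varphi}|]\le 2^{-n/2}$.

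The one substantive difference is where the approximation lives. The paper does \emph{not} use the known reflection $\Oracle_\varphi$; instead it builds $\Oracle_{\psi^\bot}$ from an ideal preparation unitary $V'$ for $\ket{\psi^\bot}$ via \Cref{lem:prep_implies_canonical}, and then replaces each of the resulting $5T+2$ uses of $V'$ by the known $V$ preparing $\ket{\varphi}$, with per-use diamond error $2|\braket{\psi|\varphi}|$ computed from \Cref{fact:diamond_norm}; this is precisely where the constant $10T+4$ comes from. Your route---applying the known $\Oracle_\varphi$ directly and comparing it to $\Oracle_{\psi^\bot}$ in the hybrid---avoids the detour through \Cref{lem:prep_implies_canonical} and in fact yields a smaller constant than $10T+4$, so your assertion that ``tracking the constants through the hybrid yields the claimed $\frac{10T+4}{2^{n/2}}$'' is not what your argument actually produces (though of course your sharper bound implies the stated inequality). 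The extra rotation $R_\varphi$ you propose inserting is unnecessary: once you compare the real $B$ to the idealized version built from $\ket{\psi^\bot}$, the discrepancy $\tilde V_\varphi\ket{0^n}\neq\ket{\psi}$ is already absorbed in the $O(\delta)$ gap between $\tilde V_\varphi$ and $\tilde V_{\psi^\bot}$.
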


\begin{proof}
Without loss of generality, assume $\ket{\bot}$ is orthogonal to all $n$-qubit states. Let $\ket{\varphi}$ be a Haar-random $n$-qubit state, and let $V$ be an arbitrary $n$-qubit unitary that satisfies $V\ket{0^n} = \ket{\varphi}$. Write $\ket{\varphi} = \alpha\ket{\psi^\bot} + \beta\ket{\psi}$, where $\ket{\psi^\bot}$ is some $n$-qubit state orthogonal to $\ket{\psi}$, with the phase chosen so that $\alpha$ is real and nonnegative. Note that $\beta = \braket{\psi|\varphi}$.

Suppose we had an oracle $V'$ acting on $n$ qubits such that $V'\ket{0^n} = \ket{\psi^\bot}$. Then we could appeal to \Cref{lem:prep_implies_canonical} to simulate an oracle $\Oracle_{\psi^\bot}$ that reflects about the state $\frac{\ket{\psi^\bot} - \ket{\bot}}{\sqrt{2}}$ using queries to $V'$. Then the composition $\Oracle_\psi\Oracle_{\psi^\bot}\Oracle_\psi$ would swap $\ket{\psi}$ and $\ket{\psi^\bot}$, while acting only on the space of $n$-qubit states. Furthermore, we would have that $\Oracle_\psi\Oracle_{\psi^\bot}\Oracle_\psi V'\ket{0^n} = \ket{\psi}$, where $\Oracle_\psi\Oracle_{\psi^\bot}\Oracle_\psi V'$ acts on $n$ qubits. So, by \Cref{def:U_psi}, we could simulate $U_\psi$ perfectly by choosing a random $(2^n-1)$-dimensional unitary $W$ and replacing calls to $U_\psi$ with $\Oracle_\psi\Oracle_{\psi^\bot}\Oracle_\psi V'W$.

Unfortunately, we do not have such an oracle $V'$; we only have $V$. However, we can show that there exists an oracle $V'$ that is \textit{close} to $V$, so if we replace all occurrences of $V'$ with $V$, the resulting unitary we get is close to a random state preparation oracle for $\ket{\psi}$. Specifically, we take $R$ to be a rotation in the 2-dimensional space spanned by $\ket{\psi}$ and $\ket{\psi^\bot}$ that satisfies $R\ket{\varphi} = \ket{\psi^\bot}$. Then, we let $V' = RV$.

$R$ is a rotation by angle $\theta = \arccos(\alpha)$ in this 2-dimensional subspace, and acts as the identity elsewhere. So, $R$ has eigenvalues $e^{i \theta}$, $e^{-i\theta}$, and $1$. The assumption that $\alpha \ge 0$ implies $\theta \le \frac{\pi}{2}$, so by \Cref{fact:diamond_norm},
\begin{equation}
\label{eq:vv'}
\diamondnorm{V \cdot V^\dagger - V' \cdot V'^\dagger} = 2\sqrt{1 - \cos^2(\theta)} = 2\sin \theta = 2|\braket{\psi|\varphi}|.
\end{equation}

\Cref{lem:prep_implies_canonical} shows that $V'$ (or more precisely, controlled-$V'$ or its inverse) is used $4T + 2$ times in implementing $T$ queries to $\Oracle_{\psi^\bot}$, which means we need $5T + 2$ applications of $V'$ to implement $T$ queries to $\Oracle_\psi\Oracle_{\psi^\bot}\Oracle_\psi V'$.

Let $B_{\psi^\bot}$ denote the quantum algorithm that simulates $A$ using $\Oracle_\psi\Oracle_{\psi^\bot}\Oracle_\psi V'W$ (for a random choice of $W$) in place of $U_\psi$, and let $B_{\varphi}$ denote the quantum algorithm that simulates $B_{\psi^\bot}$ using $V$ in place of $V'$. Then
\begin{align}
\label{eq:d0}\diamondnorm{\E_{U_\psi}[A] - B_{\varphi}} &= \diamondnorm{B_{\psi^\bot} - B_{\varphi}}\\
\label{eq:d1}&\le (5T + 2)\diamondnorm{V \cdot V^\dagger - V' \cdot V'^\dagger}\\
\label{eq:d2}&= (10T + 4)|\braket{\psi|\varphi}|,
\end{align}
where \eqref{eq:d0} holds because $\E_{U_\psi}[A]$ and $B_{\psi^\bot}$ are equivalent as superoperators; \eqref{eq:d1} holds by the subadditivity of the diamond norm under composition, because $B_{\psi^\bot}$ queries $V'$ a total of $5T + 2$ times; and \eqref{eq:d2} substitutes \eqref{eq:vv'}.

Finally, let $B = \E_{\ket{\varphi}}\left[B_\varphi\right]$ (i.e. run $B_\varphi$ for a Haar-random choice of $\ket{\varphi}$). Then
\begin{align}
\label{eq:lem_haar_final1}
\diamondnorm{\E_{U_\psi}[A] - B} &= \diamondnorm{\E_{U_\psi}[A] - \E_{\ket{\varphi}}\left[B_\varphi\right]}\\
\label{eq:lem_haar_final2} &\le \E_{\ket{\varphi}}\left[\diamondnorm{\E_{U_\psi}[A] - B_\varphi} \right]\\
\label{eq:lem_haar_final3} &\le \E_{\ket{\varphi}}\left[(10T + 4)|\braket{\psi|\varphi}|\right]\\
\label{eq:lem_haar_final4} &= (10T + 4)\E_{\ket{\varphi}}\left[\frac{\sum_{i=1}^{2^n}|\braket{i|\varphi}|}{2^n}\right]\\
\label{eq:lem_haar_final5} &\le \frac{10T + 4}{2^n}\max_{\ket{\varphi}}\left[\sum_{i=1}^{2^n}|\braket{i|\varphi}|\right]\\
\label{eq:lem_haar_final6} &\le \frac{10T + 4}{2^{n/2}},
\end{align}
where \eqref{eq:lem_haar_final1} holds by the definition of $B$; \eqref{eq:lem_haar_final2} holds by Jensen's inequality because the diamond norm, like every norm, is convex; \eqref{eq:lem_haar_final3} substitutes \eqref{eq:d2}; \eqref{eq:lem_haar_final4} holds by symmetry (the choice of orthonormal basis $\{\ket{i} : i \in [2^n]\}$ is arbitrary); \eqref{eq:lem_haar_final5} trivially upper bounds \eqref{eq:lem_haar_final4}; and \eqref{eq:lem_haar_final6} holds because the $1$-norm of an $n$-qubit quantum state is at most $2^{n/2}$ (maximized by a uniform superposition).
\end{proof}

The above theorem implies that the oracle $\Oracle_\psi$ in \Cref{thm:XHOG_O_psi} can be replaced by a Haar-random $n$-qubit unitary.

\begin{theorem}
\label{thm:XHOG_random}
Any quantum query algorithm for $(2 + \eps)$-XHOG with query access to $U_\psi$ for a Haar-random $n$-qubit state $\ket{\psi}$ (i.e. a Haar-random $n$-qubit unitary) requires $\Omega\left(\frac{2^{n/4}\eps^{5/4}}{n}\right)$ queries.
\end{theorem}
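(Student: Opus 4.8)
The plan is to deduce this immediately from \Cref{thm:standard->random} together with \Cref{thm:XHOG_O_psi}. Suppose $A$ is a quantum query algorithm making $T$ queries to $U_\psi$ that solves $(2+\eps)$-XHOG; using the equivalent formulation of \Cref{prob:xhog_haar}, this means $\E_{\ket{\psi},z_A}\!\left[|\braket{z_A|\psi}|^2\right] \ge \frac{2+\eps}{2^n}$ for a Haar-random $\ket{\psi}$, where $z_A$ is the output of $A$. By \Cref{thm:standard->random}, there is an algorithm $B$ making $2T$ queries to $\Oracle_\psi$ with $\diamondnorm{\E_{U_\psi}[A] - B} \le \frac{10T+4}{2^{n/2}}$ for every $\ket{\psi}$. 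I would first argue that $B$ nearly solves $(2+\eps)$-XHOG with $\Oracle_\psi$, and then apply the lower bound of \Cref{thm:XHOG_O_psi} to $B$.

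For the first step I would reuse the averaging argument already appearing in the proof of \Cref{thm:XHOG_O_psi}. For a fixed $\ket{\psi}$, the diamond-norm bound gives $\sum_{z \in \{0,1\}^n} \left|\Pr[z_A = z] - \Pr[z_B = z]\right| \le \frac{10T+4}{2^{n/2}}$, where $z_B$ denotes the output of $B$, so
\[
\E_{z_A}\!\left[|\braket{z_A|\psi}|^2\right] - \E_{z_B}\!\left[|\braket{z_B|\psi}|^2\right] \le \max_{z \in \{0,1\}^n} |\braket{z|\psi}|^2 \cdot \frac{10T+4}{2^{n/2}}.
\]
Taking the expectation over Haar-random $\ket{\psi}$ and invoking \Cref{fact:max_XHOG} bounds the difference of the average linear XEB scores of $A$ and $B$ by $\frac{O(nT)}{2^{3n/2}}$. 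Hence there is a constant $c > 0$ such that, whenever $T \le \frac{c\,\eps\,2^{n/2}}{n}$, this difference is at most $\frac{\eps}{2 \cdot 2^n}$, and so $B$ solves $\left(2 + \tfrac{\eps}{2}\right)$-XHOG with $\Oracle_\psi$ using $2T$ queries.

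To conclude, I would split into two cases. If $T > \frac{c\,\eps\,2^{n/2}}{n}$, then $T = \Omega\!\left(\frac{\eps\,2^{n/2}}{n}\right) = \Omega\!\left(\frac{2^{n/4}\eps^{5/4}}{n}\right)$, using that $2^{n/4} \ge \eps^{1/4}$ for all sufficiently large $n$ (since $(2+\eps)$-XHOG is information-theoretically solvable only for $\eps = O(n)$ by \Cref{fact:max_XHOG}). Otherwise, $B$ is a $2T$-query algorithm for $\left(2 + \tfrac{\eps}{2}\right)$-XHOG with $\Oracle_\psi$, so \Cref{thm:XHOG_O_psi} yields $2T = \Omega\!\left(\frac{2^{n/4}(\eps/2)^{5/4}}{n}\right)$, i.e. $T = \Omega\!\left(\frac{2^{n/4}\eps^{5/4}}{n}\right)$. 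In either case the claimed bound holds.

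I do not anticipate a genuine obstacle, since all of the substantive work has been done in \Cref{thm:standard->random}, \Cref{thm:XHOG_O_psi}, and the supporting lemmas; the only subtlety is that the simulation error $\frac{10T+4}{2^{n/2}}$ from \Cref{thm:standard->random} is only negligible in the regime $T \ll 2^{n/2}$, which is why the (entirely harmless) case split on the size of $T$ is needed. Alternatively, one could bypass the case split by assuming for contradiction that $T$ is below a small constant multiple of $\frac{2^{n/4}\eps^{5/4}}{n}$, checking that $\frac{10T+4}{2^{n/2}}$ is then $o\!\left(\frac{\eps}{n}\right)$, and running the chain \Cref{lem:ARU14} $\to$ \Cref{lem:diagonal_symmetrization} $\to$ \Cref{lem:XHOG_with_states} on $B$ exactly as in the proof of \Cref{thm:XHOG_O_psi}.
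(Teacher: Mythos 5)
Your proof is correct and follows essentially the same route as the paper: the same case split on $T$ versus $c\,\eps\,2^{n/2}/n$, the same invocation of \Cref{thm:standard->random} to pass to a $2T$-query algorithm against $\Oracle_\psi$, the same averaging argument via \Cref{fact:max_XHOG} to show the XEB score degrades by at most $O(\eps/2^n)$, and the same appeal to \Cref{thm:XHOG_O_psi} to finish. The only difference is that you spell out the averaging step explicitly, whereas the paper defers it to the analogous step in the proof of \Cref{thm:XHOG_O_psi}.
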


\begin{proof}
Consider a quantum algorithm $A$ that makes $T$ queries to $U_\psi$ and solves $(2 + \eps)$-XHOG. Let $c$ be a constant to be chosen later. If $T > c\frac{2^{n/2}\eps}{n}$, then we are done, because we can always assume that $\eps \le O(n)$ (\Cref{fact:max_XHOG}), so $\frac{2^{n/2}\eps}{n} \ge \frac{2^{n/4}\eps^{5/4}}{n}$ for sufficiently large $n$. In the complementary case, suppose that $T \le c\frac{2^{n/2}\eps}{n}$. By \Cref{thm:standard->random} and the definition of the diamond norm, there is a quantum query algorithm $B$ that makes $2T$ queries to $\Oracle_\psi$ such that the trace distance between the output of $A$ (averaged over the choice of $U_\psi$) and $B$ is at most $\frac{10T + 4}{2^{n/2}} \le \frac{14T}{2^{n/2}} \le \frac{14c\eps}{n}$ for every $\ket{\psi}$. By an argument involving \Cref{fact:max_XHOG} similar to the one used in the proof of \Cref{thm:XHOG_O_psi}, we conclude that if $c$ is a sufficiently small constant, then $B$ solves $\left(2 + \frac{\eps}{2}\right)$-XHOG (with a canonical state preparation oracle for a Haar-random state). By \Cref{thm:XHOG_O_psi}, this implies $T = \Omega\left(\frac{2^{n/4}\eps^{5/4}}{n}\right)$.
\end{proof}

\section{Fourier Sampling Circuits}
\label{sec:fourier}
In this section, we prove the quantum supremacy Tsirelson inequality for single-query algorithms over $\FS$ circuits (\Cref{prob:xhog_fourier}).

Throughout this section, we let $N = 2^n$, and let $\Fn := \left\{f : \{0,1\}^n \to \{-1, 1\}\right\}$ denote the set of all $n$-bit Boolean functions. Given a function $f \in \Fn$, we define the Fourier coefficient
$$\hat{f}(z) := \frac{1}{2^n} \sum_{x \in \{0,1\}^n} f(x) (-1)^{x \cdot z}$$
for each $z \in \{0,1\}^n$. We also define the characters $\chi_z : \{0,1\}^n \to \{-1,1\}$ for each $z \in \{0,1\}^n$:
$$\chi_z(x) := (-1)^{x \cdot z}.$$

Given oracle access to a function $f \in \Fn$, the $\FS$ quantum circuit for $f$ consists of a layer of Hadamard gates, then a single query to $f$, then another layer of Hadamard gates, so that the resulting circuit samples a string $z \in \{0,1\}^n$ with probability $\hat{f}(z)^2$. In the context of XHOG, we consider the distribution of $\FS$ circuits where the oracle $f$ is chosen uniformly at random from $\Fn$.

\begin{proposition}
\label{prop:fs_xhog}
$\FS$ circuits over $n$ qubits solve $(3 - \frac{2}{2^n})$-XHOG.
\end{proposition}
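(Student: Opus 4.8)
The plan is to analyze the naive algorithm: run the $\FS$ circuit $H^{\otimes n}U_f H^{\otimes n}$ on $\ket{0^n}$ and measure, which outputs $z \in \{0,1\}^n$ with probability $\hat f(z)^2$. Its expected linear XEB score is therefore
$$\E_f\left[\sum_{z \in \{0,1\}^n} \hat f(z)^2 \cdot \hat f(z)^2\right] = \E_f\left[\sum_{z} \hat f(z)^4\right],$$
and (using Parseval, $\sum_z \hat f(z)^2 = 1$, to certify that the output is a genuine distribution over $\{0,1\}^n$) it suffices to show this equals $\frac{3 - 2/N}{N}$ with $N = 2^n$.

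The main step is a direct fourth-moment computation. I would expand
$$\hat f(z)^4 = \frac{1}{N^4}\sum_{x_1,x_2,x_3,x_4 \in \{0,1\}^n} f(x_1)f(x_2)f(x_3)f(x_4)\,\chi_z(x_1)\chi_z(x_2)\chi_z(x_3)\chi_z(x_4),$$
and push the expectation over a uniformly random $f \in \Fn$ inside the sum. Since the values $f(x)$ are i.i.d. uniform in $\{-1,1\}$, $\E_f[f(x_1)f(x_2)f(x_3)f(x_4)]$ equals $1$ if every element of the multiset $\{x_1,x_2,x_3,x_4\}$ occurs an even number of times and $0$ otherwise; with four indices this happens exactly when all four are equal, or when the four positions split into two pairs carrying two distinct values. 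There are $N$ tuples of the first kind and $3\,N(N-1)$ of the second kind ($3$ ways to pair the four positions, $N(N-1)$ ordered choices of the two distinct values). Crucially, in every surviving tuple each character $\chi_z(x_i) = (-1)^{x_i \cdot z}$ appears an even number of times, so the product of the four character factors is $1$, independent of $z$. Hence the inner sum is $N + 3N(N-1) = 3N^2 - 2N$, giving $\E_f[\hat f(z)^4] = \frac{3N^2 - 2N}{N^4}$ for each $z$, and summing over the $N$ values of $z$ yields $\frac{3N^2 - 2N}{N^3} = \frac{3 - 2/N}{N}$, i.e. $b = 3 - \frac{2}{2^n}$ as claimed.

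I do not expect a genuine obstacle here — the whole argument is a bookkeeping exercise in counting which $f$-monomials survive and checking that the $\chi_z$ factors cancel. The one place to be careful is exactly that cancellation and the enumeration of the two-pair tuples. If one prefers to sidestep the characters entirely, an alternative is to note that for uniform $f$ the random variable $\hat f(z)$ has the same distribution as $\hat f(0^n) = \frac1N\sum_x f(x)$ (via the measure-preserving bijection $f \mapsto f\cdot\chi_z$ on $\Fn$), so $\E_f[\hat f(z)^4] = \frac{1}{N^4}\,\E\big[(\sum_x f(x))^4\big]$, and the fourth moment of a sum of $N$ i.i.d. Rademacher variables is the standard value $3N^2 - 2N$; this gives the same conclusion.
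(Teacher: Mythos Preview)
Your proof is correct and essentially follows the same approach as the paper: both compute $\E_f\bigl[\sum_z \hat f(z)^4\bigr]$ directly as a fourth-moment calculation. The paper first reduces to $\hat f(0^n)$ via the bijection $f \mapsto f\cdot\chi_z$ (your ``alternative'') and then reads off the answer from the fourth central moment of a binomial, while your main argument does the equivalent combinatorial expansion in place; these are the same computation organized slightly differently.
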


\begin{proof}
Because the circuit samples $z$ with probability $\hat{f}(z)^2$, the expected linear XEB score is:
\begin{align}
\label{eq:fs_xhog_1}\E_{f \in \Fn}\left[ \sum_{z \in \{0,1\}^n} \hat{f}(z)^4 \right] &= \E_{f \in \Fn}\left[ \sum_{z \in \{0,1\}^n} \widehat{f \cdot \chi_z}(0^n)^4 \right]\\
\label{eq:fs_xhog_2}&= 2^n \E_{f \in \Fn}\left[ \hat{f}(0^n)^4 \right]\\
\label{eq:fs_xhog_3}&= 2^n \left(\frac{2}{2^n}\right)^4 \E\left[\left(B(2^n, 1/2) - \E\left[B(2^n, 1/2)\right]\right)^4 \right]\\
\label{eq:fs_xhog_4}&= \frac{3 - \frac{2}{2^n}}{2^n},
\end{align}
where \eqref{eq:fs_xhog_1} applies the substitution $\hat{f}(z) = \widehat{f \cdot \chi_z}(0^n)$; \eqref{eq:fs_xhog_2} is valid because if $f$ is uniform over $\Fn$ then so is $f \cdot \chi_z$; \eqref{eq:fs_xhog_3} uses the fact that $\frac{2^n}{2}(\hat{f}(0^n) + 1)$ is binomially distributed with $2^n$ trials and success probability $\frac{1}{2}$; and \eqref{eq:fs_xhog_4} uses the formula $Np(1-p)(1 + (3N - 6)p(1 - p))$ for the $4$th central moment of a binomial distribution with $N$ trials and success probability $p$.
\end{proof}

The remainder of this section constitutes the proof of the following theorem, which shows the optimality of the 1-query algoritm for XHOG with $\FS$ circuits:

\begin{theorem}\label{thm:fourier_xhog}
Any $1$-query algorithm for $b$-XHOG over $n$-qubit $\FS$ circuits satisfies $b \le 3 - \frac{2}{2^n}$.
\end{theorem}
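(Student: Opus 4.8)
The plan is to combine the polynomial method with elementary Fourier analysis on $\{-1,1\}^{2^n}$. Fix a $1$-query algorithm $A$, and for each $z\in\{0,1\}^n$ let $p_z(f)$ be the probability that $A$ outputs $z$ when given the phase oracle $U_f$, where we identify $f\in\Fn$ with the point $(f(x))_{x}\in\{-1,1\}^{2^n}$. By the polynomial method of Beals et al.~\cite{BBCMdW01}, each amplitude of $A$'s final state is an affine function of the diagonal entries $f(x)$ of $U_f$ — this holds for the most general single-query algorithm, with ancillas and controlled queries, and a $\pm1$-valued phase oracle has $U_f^\dagger = U_f$ — so each $p_z$ is a real polynomial of degree at most $2$, and using $f(x)^2=1$ we may take it multilinear. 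I will only use two properties: (i) $p_z(f)\ge 0$ for all $f$, and (ii) $\sum_z p_z(f)=1$ for all $f$; together these force $0\le p_z(f)\le 1$ pointwise. Since the linear XEB score of $A$ equals $\E_f\bigl[\sum_z p_z(f)\,\hat f(z)^2\bigr]$, the algorithm solves $b$-XHOG precisely when $b\le 2^n\,\E_f\bigl[\sum_z p_z(f)\,\hat f(z)^2\bigr]$, so it suffices to upper bound this quantity.

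Next I would compute $\E_f[p_z(f)\hat f(z)^2]$ exactly. Expanding $\hat f(z)^2 = \frac{1}{4^n}\sum_{x,y}(-1)^{(x\oplus y)\cdot z}f(x)f(y)$, separating the diagonal $x=y$ contribution, and writing $\widehat{p_z}(S)=\E_f[p_z(f)\prod_{w\in S}f(w)]$, one obtains
\[
\E_f\!\left[p_z(f)\,\hat f(z)^2\right] = \frac{\widehat{p_z}(\emptyset)}{2^n} + \frac{2}{4^n}\,c_z, \qquad c_z := \sum_{\{x,y\}:\,x\neq y}(-1)^{(x\oplus y)\cdot z}\,\widehat{p_z}(\{x,y\}).
\]
This is where the degree bound enters: since $p_z$ has degree at most $2$, evaluating it at $\chi_z=((-1)^{x\cdot z})_x$ and at $-\chi_z$ and adding cancels the degree-$1$ part, giving $p_z(\chi_z)+p_z(-\chi_z)=2\widehat{p_z}(\emptyset)+2c_z$, i.e.\ $c_z = \tfrac12\bigl(p_z(\chi_z)+p_z(-\chi_z)\bigr)-\widehat{p_z}(\emptyset)$. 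Summing over $z$ and using $\sum_z\widehat{p_z}(\emptyset)=\E_f[\sum_z p_z(f)]=1$,
\[
2^n\,\E_f\!\left[\sum_z p_z(f)\,\hat f(z)^2\right] = 1 + \frac{1}{2^n}\sum_{z\in\{0,1\}^n}\bigl(p_z(\chi_z)+p_z(-\chi_z)\bigr) - \frac{2}{2^n}.
\]

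Once the objective is in this form the theorem is immediate: the sum on the right has $2\cdot 2^n$ terms, each equal to $p_z(w)$ for some $z$ and some $w\in\{-1,1\}^{2^n}$, hence in $[0,1]$ by (i) and (ii), so the sum is at most $2\cdot 2^n$ and therefore $b\le 1+2-\frac{2}{2^n}=3-\frac{2}{2^n}$. This matches \Cref{prop:fs_xhog}: the naive algorithm has $p_z(f)=\hat f(z)^2$, and since $\chi_z$ and $-\chi_z$ globally maximize $\hat f(z)^2$ (at value $1$), every term is exactly $1$. Equivalently one can package the whole argument as a linear program whose variables are the coefficients $\{\widehat{p_z}(S)\}_{|S|\le 2}$, whose constraints encode (i) and (ii), and whose linear objective is $b$; the estimate above is then an optimal dual certificate built from the inequalities $p_z(\chi_z)\le 1$ and $p_z(-\chi_z)\le 1$.

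The step I expect to require the most care is the Fourier bookkeeping in the second paragraph, and in particular invoking the degree-$\le 2$ hypothesis correctly: it is exactly what lets the weighted sum of degree-$2$ Fourier coefficients $c_z$ be rewritten through the two point evaluations $p_z(\pm\chi_z)$, and it is the only place the single-query assumption is used (which is also why the argument does not survive a second query). I would additionally verify the normalization constants in the identities above, confirm that the polynomial-method representation is valid for arbitrary single-query algorithms (controlled and inverse queries, arbitrary ancillas), and use the exact agreement with \Cref{prop:fs_xhog} as a numerical sanity check.
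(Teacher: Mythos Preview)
Your argument is correct, and it is genuinely different from the paper's. Both proofs start from the polynomial method, writing each output probability $p_z(f)$ as a multilinear polynomial of degree at most $2$, but then diverge. The paper sets up the full linear program, symmetrizes over the action $f\mapsto f\cdot\chi_y$ to reduce to a single polynomial $p(f)$, writes down the dual LP, and exhibits an explicit dual solution supported on the \emph{balanced} functions (those with $\hat f(0^n)=0$), using the known Fourier coefficients of the indicator $\Half_N$. You instead compute the objective directly, and the crucial move is your observation that when $\deg p_z\le 2$, the weighted sum $c_z=\sum_{\{x,y\}}(-1)^{(x\oplus y)\cdot z}\widehat{p_z}(\{x,y\})$ is exactly the degree-$2$ part of $p_z$ evaluated at $\chi_z$, so $p_z(\chi_z)+p_z(-\chi_z)=2\widehat{p_z}(\emptyset)+2c_z$. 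This replaces LP duality by two point evaluations and the trivial bound $p_z(f)\le 1$.

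What each approach buys: your argument is shorter and more elementary---no LP machinery, no $\Half_N$ Fourier expansion---and it makes the tightness transparent, since the naive algorithm has $p_z(\pm\chi_z)=\hat{(\pm\chi_z)}(z)^2=1$, saturating every inequality you used. In dual-certificate language, you are using the constraints $p_z(\pm\chi_z)\le 1$ (i.e., the ``maximally $z$-biased'' functions), whereas the paper uses the constraints $p(f)\ge 0$ at the balanced functions; these are distinct optimal dual solutions. The paper's LP framework, on the other hand, is set up to handle $T>1$ queries in principle (the degree-$>2$ constraints appear explicitly in the dual), whereas your identity $p_z(\chi_z)+p_z(-\chi_z)=2\widehat{p_z}(\emptyset)+2c_z$ is specific to degree $2$ and, as you correctly note, is the only place the single-query assumption enters.
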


To prove \Cref{thm:fourier_xhog}, we use the polynomial method of Beals et al. \cite{BBCMdW01}. Consider a quantum query algorithm that makes $T$ queries to $f \in \Fn$ and outputs a string $z \in \{0,1\}^n$. The polynomial method implies that for each $z \in \{0,1\}^n$, the probability that the algorithm outputs $z$ can be expressed as a real multilinear polynomial of degree $2T$ in the bits of $f$. We write such a polynomial as:
$$p_z(f) = \sum_{S \subset \{0,1\}^n, |S| \le 2T} c_{z,S} \cdot \prod_{x \in S} f(x).$$
Then, the expected XEB score of this quantum query algorithm is given by:
\begin{equation}
\label{eq:fs_xeb}
\frac{1}{2^N} \sum_{f \in \Fn} \sum_{z \in \{0,1\}^n} p_z(f) \cdot \hat{f}(z)^2.
\end{equation}
Our key observation is that the quantity \eqref{eq:fs_xeb} is linear in the coefficients $c_{z, S}$. This allows us to express the largest XEB score achievable by polynomials of degree $2T$ as a linear program, with the constraints that the polynomials $\{p_z(f) : z \in \{0,1\}^n\}$ must represent a probability distribution. Then, the objective value of the linear program can be upper bounded by giving a solution to the dual linear program. We can write the linear program as follows:
\begin{equation}
\label{eq:primal_lp_unsimplified}
\boxed{\begin{array}{lll} 
    \text{max}  & \displaystyle\frac{1}{2^N} \sum_{f \in \Fn} \sum_{z \in \{0,1\}^n} p_z(f) \cdot \hat{f}(z)^2    \\
    \text{subject to} & p_z(f) \ge 0 & \text{ for each } f \in \Fn; z \in \{0, 1\}^n\\
    &\displaystyle\sum_{z \in \{0,1\}^n} p_z(f) = 1 & \text{ for each } f \in \Fn\\
    &c_{z,S} \in \mathbb{R} & \text{ for each } z \in \{0, 1\}^n; 0 \le |S| \le 2T
\end{array}}
\end{equation}
Before giving a solution to (or even writing down) the dual linear program, we will first show that the primal linear program can be simplified considerably.

We first argue that one can apply a sort of symmetrization to reduce the number of variables. Consider a solution to the linear program \eqref{eq:primal_lp_unsimplified} in terms of polynomials $p_z$, and define:
$$p'_z(f) = \frac{1}{N} \sum_{y \in \{0,1\}^n} p_{y \oplus z}(f \cdot \chi_y).$$
Then we claim that the polynomials $p'_z$ are also a solution to the linear program with the same objective value. The intuition is that $\hat{f}(z) = \widehat{f \cdot \chi_y}(y \oplus z)$, so we might as well assume that the probability of outputting $z$ on $f$ is the same as the probability of outputting $y \oplus z$ on $f \cdot \chi_y$, by averaging over the possible choices of $y$. We verify that the objective value is:
\begin{align*}
\frac{1}{2^N} \sum_{f \in \Fn} \sum_{z \in \{0,1\}^n} p'_z(f) \cdot \hat{f}(z)^2 &= \frac{1}{N2^N} \sum_{f \in \Fn} \sum_{z \in \{0,1\}^n} \sum_{y \in \{0,1\}^n} p_{y \oplus z}(f \cdot \chi_y) \cdot \hat{f}(z)^2\\
&= \frac{1}{N2^N} \sum_{f \in \Fn} \sum_{z \in \{0,1\}^n} \sum_{y \in \{0,1\}^n} p_{y \oplus z}(f \cdot \chi_y) \cdot \widehat{f \cdot \chi_y}(y \oplus z)^2\\
&= \frac{1}{N2^N} \sum_{y \in \{0,1\}^n} \sum_{f \in \Fn} \sum_{z \in \{0,1\}^n} p_{y \oplus z}(f \cdot \chi_y) \cdot \widehat{f \cdot \chi_y}(y \oplus z)^2\\
&= \frac{1}{2^N} \sum_{f \in \Fn} \sum_{z \in \{0,1\}^n} p_z(f) \cdot \hat{f}(z)^2.
\end{align*}
The nonnegativity constraint on each $p'_z(f)$ is satisfied by convexity, and the polynomials sum to $1$ for each $f$ because
\begin{align*}
\sum_{z \in \{0,1\}^n} p'_z(f) &= \frac{1}{N} \sum_{z \in \{0,1\}^n} \sum_{y \in \{0,1\}^n} p_{y \oplus z}(f \cdot \chi_y)\\
&= \frac{1}{N} \sum_{y \in \{0,1\}^n} \sum_{z \in \{0,1\}^n} p_{y \oplus z}(f \cdot \chi_y)\\
&= \frac{1}{N} \sum_{y \in \{0,1\}^n} \sum_{z \in \{0,1\}^n} p_z(f \cdot \chi_y)\\
&= \frac{1}{N} \sum_{y \in \{0,1\}^n} 1\\
&= 1.
\end{align*}
Notice that the $p'_z$s satisfy $p'_z(f) = p'_{0^n} (f \cdot \chi_z)$. So, we can rewrite the linear program in terms of $p'_{0^n}(f)$ alone. Define $p(f) = p'_{0^n}(f)$ and define the coefficients of $p(f)$ by:
$$p(f) = \sum_{S \subset \{0,1\}^n, |S| \le 2T} c_S \cdot \prod_{x \in S} f(x).$$
Now, we can rewrite the linear program \eqref{eq:primal_lp_unsimplified} in terms of $p(f)$ as:
\begin{equation}
\label{eq:primal_lp_symmetrized}
\boxed{\begin{array}{lll} 
    \text{max}  & \displaystyle\frac{1}{2^N} \sum_{f \in \Fn} \sum_{z \in \{0,1\}^n} p(f \cdot \chi_z) \cdot \hat{f}(z)^2    \\
    \text{subject to} & p(f) \ge 0 & \text{ for each } f \in \Fn\\
    &\displaystyle\sum_{z \in \{0,1\}^n} p(f \cdot \chi_z) = 1 & \text{ for each } f \in \Fn\\
    &c_S \in \mathbb{R} & \text{ for each } 0 \le |S| \le 2T
\end{array}}
\end{equation}

We can simplify the objective function of the linear program \eqref{eq:primal_lp_symmetrized} even further:
\begin{align*}
\frac{1}{2^N} \sum_{f \in \Fn} \sum_{z \in \{0,1\}^n} p(f \cdot \chi_z) \cdot \hat{f}(z)^2 &= \frac{1}{2^N} \sum_{f \in \Fn} \sum_{z \in \{0,1\}^n} p(f \cdot \chi_z) \cdot \widehat{f \cdot \chi_z}(0^n)^2\\
&= \frac{1}{2^N} \sum_{z \in \{0,1\}^n} \sum_{f \in \Fn} p(f \cdot \chi_z) \cdot \widehat{f \cdot \chi_z}(0^n)^2\\
&= \frac{1}{2^N} \sum_{z \in \{0,1\}^n} \sum_{f \in \Fn} p(f) \cdot \widehat{f}(0^n)^2\\
&= \frac{N}{2^N} \sum_{f \in \Fn} p(f) \cdot \hat{f}(0^n)^2.
\end{align*}
Notice that we can also assume $p(f) = p(-f)$ without loss of generality, because the squared Fourier coefficient of $f$ is the same as the squared Fourier coefficient of its negation, and because replacing $p(f)$ by $\frac{p(f) + p(-f)}{2}$ still satisfies all of the constraints. In particular,we can assume $c_S = 0$ if $|S|$ is odd.

Next, we turn to simplifying the equality constraint. Define $q(f)$ by:
\begin{align*}
q(f) &:= \sum_{z \in \{0,1\}^n} p(f \cdot \chi_z)\\
&= \sum_{z \in \{0,1\}^n} \sum_{|S| \le 2T} c_S \cdot \prod_{x \in S} f(x) \cdot (-1)^{x \cdot z},
\end{align*}
which is also a multilinear polynomial in $f$ of degree $2T$. Then the equality constraint reads as $q(f) = 1$ for every $f \in \Fn$. Because $q$ is multilinear, this implies $q(f) = 1$ in fact holds identically over all $f : \{0,1\}^n \to \Reals$, and not just Boolean-valued $f$. So, the coefficient on the monomial of the set $S$ in $q$ must be $1$ if $S = \emptyset$ and $0$ otherwise. For $S$ empty, we have:
$$\sum_{z \in \{0,1\}^n} c_\emptyset = 1,$$
which is to say that $c_\emptyset = \frac{1}{N}$. Otherwise, for nonempty $S$ we have:
$$\sum_{z \in \{0,1\}^n} c_S \prod_{x \in S}(-1)^{x \cdot z} = 0.$$
We can rewrite the equation above as:
$$\sum_{z \in \{0,1\}^n} c_S (-1)^{\left(\sum_{x \in S} x \right) \cdot z} = 0.$$
Now, there are two cases:
\begin{itemize}
\item If $\mathop{\bigoplus}_{x \in S} x = 0^n$, then the equation holds if and only if $c_S = 0$.
\item If $\mathop{\bigoplus}_{x \in S} x \neq 0^n$, then the terms in the sum are $c_S$ half of the time and $-c_S$ the other half of the time, so equality always holds.
\end{itemize}
Putting this altogether, we can conclude:
\begin{itemize}
\item $c_{\emptyset} = \frac{1}{N}$.
\item $c_S = 0$ if $\mathop{\bigoplus}_{x \in S} x = 0^n$.
\item $c_S = 0$ if $|S|$ is odd.
\item $c_S$ is otherwise unconstrained by $q(f) = 1$.
\end{itemize}

After all of this, our linear program now has the much simpler form:\footnote{Here, $\bigoplus S$ is shorthand for $\bigoplus_{x \in S} x$.}
\begin{equation}
\label{eq:primal_lp_variables_reduced}
\boxed{\begin{array}{lll} 
    \text{max}  & \displaystyle\frac{N}{2^N} \sum_{f \in \Fn} p(f) \cdot \hat{f}(0^n)^2    \\
    \text{subject to} & p(f) \ge 0 & \text{ for each } f \in \Fn\\
    &\displaystyle c_\emptyset = \frac{1}{N}\\
    &c_S \in \mathbb{R} & \text{ for each } 2 \le |S| \le 2T \text{ with } |S| \text{ even, }\bigoplus S \neq 0^n
    \end{array}}
\end{equation}
    
Alternatively, we can express the linear program \eqref{eq:primal_lp_variables_reduced} purely in terms of the variables $c_S$, rather than leaving them implicit in $p(f)$. In the objective function, the coefficient on $c_S$ is given by:
$$k_S := \frac{N}{2^N} \sum_{f \in \Fn} \hat{f}(0^n)^2 \cdot \prod_{x \in S} f(x).$$
We compute $k_S$ depending on the size of $S$:
\begin{itemize}
\item If $S = \emptyset$, then $k_S = N \cdot \E\left[\hat{f}(0^n)^2\right] = N \cdot \E\left[\hat{f}(0^n)^2 - \E\left[\hat{f}(0^n)\right]^2\right] = N \cdot \Var\left[\hat{f}(0^n)\right] = 1$, because for a random $f$, $\hat{f}(0^n)$ is a sum of $2^n$ independent $\pm \frac{1}{2^n}$ variables.
\item If $S \neq \emptyset$, then:
\begin{align*}
k_S &= \frac{N}{2^N} \sum_{f \in \Fn} \frac{1}{2^{2n}} \sum_{x_1 \in \{0,1\}^n} \sum_{x_2 \in \{0,1\}^n} f(x_1)f(x_2)\prod_{x \in S}f(x)\\
&= \frac{N}{2^N} \frac{1}{2^{2n}} \sum_{x_1 \in \{0,1\}^n} \sum_{x_2 \in \{0,1\}^n} \sum_{f \in \Fn} f(x_1)f(x_2)\prod_{x \in S}f(x)\\
&= \begin{cases}\frac{2}{N} & |S| = 2\\0 & |S| > 2 \end{cases},
\end{align*}
because in the second line, the innermost sum is $0$ unless $\{x_1, x_2\} = S$.
\end{itemize}
So, the final primal linear program takes the form:

\begin{equation}
\label{eq:primal_lp_final}
\boxed{\begin{array}{lll} 
    \text{max}  & \displaystyle c_\emptyset + \frac{2}{N}\sum_{|S| = 2} c_S  \\
    \text{subject to} & \displaystyle \sum_S c_S \cdot \prod_{x \in S} f(x) \ge 0 & \text{ for each } f \in \Fn\\
    &\displaystyle c_\emptyset = \frac{1}{N}\\
    &c_S \in \mathbb{R} & \text{ for each } 2 \le |S| \le 2T \text{ with } |S| \text{ even, }\bigoplus S \neq 0^n
    \end{array}}
\end{equation}

Standard manipulations reveal the dual linear program of \eqref{eq:primal_lp_final}:

\begin{equation}
\label{eq:dual_lp}
\boxed{\begin{array}{lll} 
    \text{min}  &     \displaystyle \frac{b}{N}\\
    \text{subject to} & \displaystyle b - \sum_{f \in \Fn} \psi_f = 1\\
    & \displaystyle - \sum_{f \in \Fn} \psi_f\prod_{x \in S}f(x) = \frac{2}{N} & \text{ for each } |S| = 2\\
    & \displaystyle - \sum_{f \in \Fn} \psi_f\prod_{x \in S}f(x) = 0 & \text{ for each } 4 \le |S| \le 2T \text{ with } |S| \text{ even, }\bigoplus S \neq 0^n\\
    &\psi_f \ge 0 & \text{ for each } f \in \Fn\\
    &b \in \mathbb{R}
    \end{array}}
\end{equation}
    
We now construct a solution to the dual linear program \eqref{eq:dual_lp} for $T=1$ query. Our dual solution is motivated by complementary slackness, which guarantees that a variable in \eqref{eq:dual_lp} of the optimal dual solution is nonzero if and only if the corresponding constraint in \eqref{eq:primal_lp_final} is tight in the optimal primal solution. The naive XHOG algorithm solves the primal linear program with $p(f) = \hat{f}(0^n)^2$, so $p(f) = 0$ if and only if $\hat{f}(0^n) = 0$. Thus, if we think that the naive algorithm is optimal, then we should look for a dual solution where $\psi_f$ is nonzero if and only if $\hat{f}(0^n) = 0$.

For some $\kappa$ to be chosen later, we choose $\psi_f = \kappa$ if $\hat{f}(0^n) = 0$ and $\psi_f = 0$ otherwise. In other words, we let $\psi_f = \kappa \cdot \Half_N(f)$, where $\Half_N : \{-1,1\}^N \to \{0,1\}$ is the $0$-$1$ indicator of the set of functions in $\Fn$ (viewed as $N$-bit strings) with exactly $\frac{N}{2}$ coordinates equal to $-1$.

Viewing $\psi_f = \psi(f)$ as a function $\psi: \{-1,1\}^N \to \Reals$, it will be convenient to rewrite the constraints of the dual linear program in terms of the Fourier coefficients of $\psi$. We use the inner product formula below for Fourier coefficients, with the understanding that we identify a set $S \subseteq [N]$ with its characteristic string in $\{0,1\}^N$:
$$\hat{\psi}(S) = \frac{1}{2^N} \sum_{f \in \{-1,1\}^N} \psi_f \prod_{x \in S}f(x).$$
Now, the dual linear program reads as:
\begin{equation}
\label{eq:dual_lp_fourier}
\boxed{\begin{array}{lll} 
    \text{min}  &     \displaystyle \frac{b}{N}\\
    \text{subject to} & \displaystyle b - 2^N\hat{\psi}(\emptyset) = 1\\
    & \displaystyle 2^N\hat{\psi}(S) = -\frac{2}{N} & \text{ for each } |S| = 2\\
    & \displaystyle \hat{\psi}(S) = 0 & \text{ for each } 4 \le |S| \le 2T \text{ with } |S| \text{ even, }\bigoplus S \neq 0^n\\
    &\psi_f \ge 0 & \text{ for each } f \in \Fn\\
    &b \in \mathbb{R}
    \end{array}}
\end{equation}

The Fourier coefficients of $\Half_N$ are well known \cite[Theorem~5.19]{OD14}, though they are also easy to compute by hand for sets of small size. For $|S| = 2j$, they are given by:
$$\widehat{\Half_N}(S) = (-1)^j \frac{\binom{N/2}{j}}{\binom{N}{2j}} \cdot \frac{\binom{N}{N/2}}{2^N}.$$
The $|S| = 2$ equality constraint of the dual linear program \eqref{eq:dual_lp_fourier} implies
$$2^N \cdot \kappa \cdot \widehat{\Half_N}(S) = -\frac{2}{N}$$
$$\kappa = \frac{1}{2^N} \cdot \frac{2}{N} \cdot \frac{\binom{N}{2}}{\binom{N/2}{1}} \cdot \frac{2^N}{\binom{N}{N/2}} = \frac{4\binom{N}{2}}{N^2\binom{N}{N/2}} = \frac{2(N-1)}{N\binom{N}{N/2}}.$$
Plugging this value of $\kappa$ into the constraint on $\hat{\psi}(\emptyset)$ gives:
$$b - 2^N \cdot \kappa \cdot \widehat{\Half_N}(\emptyset) = 1$$
$$b = 1 + 2^N \cdot \kappa \cdot \frac{\binom{N/2}{0}}{\binom{N}{0}} \cdot \frac{\binom{N}{N/2}}{2^N} = 1 + 2\frac{N-1}{N} = 3 - \frac{2}{N}.$$
This completes the proof, as we have shown a solution to the dual linear program with objective value $\frac{3 - \frac{2}{N}}{N}$.

\section{Discussion}
\label{sec:discussion}
The most natural question left for future work is whether our bounds could be improved. Our lower bounds for $b$-XHOG with $\Oracle_\psi$ or $U_\psi$ show that for constant $\eps$, $(2 + \eps)$-XHOG requires $\Omega\left(\frac{2^{n/4}}{\poly(n)}\right)$ queries to either oracle, while the best upper bound we know of solves $(2 + \eps)$-XHOG in $O\left(2^{n/3}\right)$ queries. We conjecture that this upper bound is tight.

One possible approach towards improving the lower bound for $b$-XHOG with $\Oracle_\psi$ (and by extension, $U_\psi$) is to use the polynomial method, as we did for the $\FS$ lower bound. Indeed, the output probabilities of an algorithm that makes $T$ queries to $\Oracle_\psi$ can be expressed as degree-$2T$ polynomials in the entries of $\Oracle_\psi$. If we write $\ket{\psi} = \sum_{i=1}^{N} \alpha_i \ket{i}$, then these are polynomials in the amplitudes $\alpha_1,\ldots,\alpha_N$ and the conjugates of the amplitudes $\alpha_1^*,\ldots,\alpha_N^*$. Because of the invariance of the Haar measure with respect to phases, and because the linear XEB score depends only on the magnitudes of the amplitudes, we can further assume without loss of generality that the output probabilities are polynomials in the variables $|\alpha_1|^2, \ldots, |\alpha_N|^2$, which are equivalently the measurement probabilities of $\ket{\psi}$ in the computational basis. We can also assume that these polynomials are homogeneous, because the input variables satisfy $\sum_{i=1}^N |\alpha_i|^2 = 1$, so we can multiply any lower-degree terms by this sum to make all terms have the same degree. Like in our $\FS$ lower bound, the polynomials are constrained to represent a probability distribution for all valid inputs. However, unlike the $\FS$ lower bound, this introduces uncountably many constraints in the primal linear program: the polynomials representing the output probabilities must be nonnegative for \textit{all} inputs $(|\alpha_1|^2, \ldots, |\alpha_N|^2)$ on the probability simplex. It may still be possible to exhibit a solution to the dual linear program if only finitely many of the constraints are relevant (such an approach was used in \cite{BT13}, for example). Put another way, it might suffice to throw out all but finitely many of the primal constraints to obtain a nontrivial upper bound on the value of the linear program.

Our $b$-XHOG bound for $\FS$ circuits is tight, but it only applies to single-query algorithms. In principle, our lower bound approach via the polynomial method could be generalized to algorithms that make additional queries, by increasing the degree of the polynomials in the linear program \eqref{eq:primal_lp_final} and exhibiting another dual solution. The challenge seems to be that the parity constraint on the monomials with nonzero coefficients becomes unwieldy when working with polynomials of larger degree.

Beyond possible improvements to the query complexity bounds, it would be interesting to give some evidence that beating the naive XHOG algorithm is hard in the real world. Aaronson and Gunn \cite{AG19} showed that $(1 + \eps)$-XHOG is \textit{classically} hard, assuming the classical hardness of nontrivially estimating the output probabilities of random quantum circuits. It is not clear whether a similar argument could work for quantum algorithms, though, because sampling from a random quantum circuit gives a better-than-trivial algorithm for estimating its output probabilities.



\section*{Acknowledgements}
Thanks to Scott Aaronson, Sabee Grewal, Sam Gunn, Robin Kothari, Daniel Liang, Patrick Rall, Andrea Rocchetto, and Justin Thaler for helpful discussions and illuminating insights. Thanks also to anonymous reviewers for helpful comments regarding the presentation of this work. This work was supported by a Vannevar Bush Fellowship and a National Defense Science and Engineering Graduate (NDSEG) Fellowship from the US Department of Defense.


\phantomsection\addcontentsline{toc}{section}{References}
\bibliographystyle{plainnat}
\bibliography{Tsirelson}

\end{document}